\newtheorem{definition}{Definition}
\newtheorem{theorem}{Theorem}
\newtheorem{lemma}{Lemma}
\newtheorem{corollary}{Corollary}
\newtheorem{proposition}{Proposition}
\numberwithin{equation}{section}
\numberwithin{definition}{section}
\numberwithin{theorem}{section}
\numberwithin{lemma}{section}
\numberwithin{corollary}{section}
\numberwithin{proposition}{section}
\numberwithin{remark}{section}
\numberwithin{assumption}{section}
\newlength{\figurewidth}
\begin{document}

\title{Robust Matrix Completion with Deterministic Sampling via Convex Optimization}

\author{Yinjian~Wang
\IEEEcompsocitemizethanks{\IEEEcompsocthanksitem Y. Wang is with the School of Information and Electronics,
	Beijing Institute of Technology, and Beijing Key Laboratory of Fractional
	Signals and Systems, 100081 Beijing, China, and also with the Institute of Methodologies for Environmental Analysis, National Research Council (CNR-IMAA), 85050 Tito, Italy (e-mail: yinjw@bit.edu.cn).}}

{}

\IEEEcompsoctitleabstractindextext{%

\begin{abstract}
This paper deals with the problem of robust matrix completion---retrieving a low-rank matrix and a sparse matrix from the compressed counterpart of their superposition. Though seemingly not an unresolved issue, we point out that the compressed matrix in our case is sampled in a deterministic pattern instead of those random ones on which existing studies depend. In fact, deterministic sampling is much more hardware-friendly than random ones. The limited resources on many platforms leave deterministic sampling the only choice to sense a matrix, resulting in the significance of investigating robust matrix completion with deterministic pattern. In such spirit, this paper proposes \textit{restricted approximate $\infty$-isometry property} and proves that, if a \textit{low-rank} and \textit{incoherent} square matrix and certain deterministic sampling pattern satisfy such property and two existing conditions called  \textit{isomerism} and \textit{relative well-conditionedness}, the exact recovery from its sampled counterpart grossly corrupted by a small fraction of outliers via convex optimization happens with very high probability. 
\end{abstract}
\begin{IEEEkeywords}
Matrix completion, compressed robust principle component analysis, deterministic sampling, identifiability, low-rankness, sparse.
\end{IEEEkeywords}}
\maketitle
\IEEEdisplaynotcompsoctitleabstractindextext
\IEEEpeerreviewmaketitle

\IEEEraisesectionheading{\section{Introduction}\label{sec:introduction}}
\IEEEPARstart{T}{he} presence of missing data is ubiquitous in real world. For example, in the famous \textsl{Netflix Prize} \cite{bennett2007kdd} problem, people are looking for computational methods to infer any user's preference on any movie, with only a few ratings per user given. Applications like this stimulated the researches on the problem of Matrix Completion (MC) \cite{johnson1990matrix} which concerns the recovery of an unknown matrix from a fraction of its entries. Though sounds mysterious, Cand\`es \textit{et al.} \cite{candes2010power,WOS:000272299900003} pioneered to unveil that exact recovery of an $n\times n$ matrix with rank $r$ is possible at a sampling rate of order $n^ar^b\log^c(n)$. The conditions they established, \textit{i.e.,} \textit{low-rankness} and \textit{incoherence}, has become the core of the subsequent many MC researches, including those improving the theoretical sampling boundaries \cite{keshavan2010matrix,recht2011simpler,7064749} and those developing efficient algorithms \cite{doi:10.1137/080738970,6262492,6979050}.

Yet, those previous theories or algorithms lay on the assumption that the sampling mechanism of those observed entries are uniformly random. Such an assumption may seem in line with the cases where the data missing happens due to non-human factors, \textit{e.g.,} the recommendation systems \cite{8400447}. But there are a lot of other cases where the sampling is performed by man-made systems. The complexity of implementing a random sampling hinders its application from tasks requiring speed with limited hardware resources. In such task, \textit{e.g.,} snapshot compressive imaging \cite{9363502}, the sampling pattern has to be fixed and deterministic, promoting the studies on MC with Deterministic Sampling (MCDS). Chen \textit{et al.} \cite{chen2014coherent} studied MC with the observed entries sampled proportionally to the local row and column coherences. In \cite{bhattacharya2022matrix}, the probability regarding each entry being available is supposed as a function of the entry itself. Besides, Pimentel-Alarcón \textit{et al.} \cite{pimentel2016characterization} characterized the conditions on deterministic sampling for finite completability. Shapiro \textit{et al.} \cite{shapiro2018matrix} established well-posedness condition for the local uniqueness of minimum rank matrix completion solutions. Based on graph limit theory, Chatterjee \textit{et al.} \cite{chatterjee2020deterministic} provided asymptotically solvable guarantee for sequential MC problems with arbitrary sampling patterns. Moreover, Burnwal \textit{et al.} \cite{burnwal2020deterministic} derived sufficient recovery conditions for the sampling set chosen as the edge set of an asymmetric Ramanujan bigraph. However, these literature either studied the sampling patterns that are not totally arbitrary/deterministic, or presented recovery theories that are only asymptotic/approximate. In spirit of searching the global exact missing data recovery guarantee for arbitrary deterministic sampling, Liu \textit{et al.} \cite{liu2019matrix} proposed \textit{isomerism} \cite{liu2017new} and \textit{relative well-conditionedness}, using which they concluded a series of theories for MCDS based on different models including convex optimization. Later on, for those sampling violating \textit{isomerism}, they further proposed \cite{9851461} convolutional low-rankness and Convolutional Nuclear Norm (CNN) to realize MCDS with arbitrary sampling via convex optimization.

Another noteworthy issue in MC is its robustness to noises/outliers. In fact, even a small fraction of noised entries can severely jeopardize the low-rankness of the complete matrix \cite{10.1109/TIT.2011.2173156}. Since being low-dimensionally structured is the very basic idea behind any high-dimensional data reconstruction tasks \cite{Wright_Ma_2022}, it is of significance to study Robust MC (RMC). Cand\`es \textit{et al.} \cite{candes2010matrix} demonstrated the possibility of approximate matrix completion from noisy sampled entries. Keshavan \textit{et al.} \cite{keshavan2009matrix} generalized \textsc{OptSpace} \cite{keshavan2010matrix} to the noisy case and proved performance guarantees that are order-optimal in a number of circumstances. Furthermore, as generalization to Robust Principle Component Analysis (RPCA), Cand{\`e}s \cite{candes2011robust} revealed that RMC can be exactly solved through very convenient convex optimization. The theoretical recovery guarantee therein was later improved to a result with fewer log factors \cite{li2013compressed}. Besides, Cherapanamjeri \cite{cherapanamjeri2017nearly} presented projected gradient descent-based algorithm that solves RMC using nearly optimal number of observations while tolerating a nearly optimal number of corruptions. Nevertheless, these theories all relent to the setting of random sampling. There has been very few studies concerning RMC with Deterministic Sampling (RMCDS). The results in \cite{klopp2017robust} apply to deterministic sampling but only approximate guarantee with error bounds instead of exact recovery theory was given. Finite and unique completablity was discussed in \cite{ashraphijuo2017deterministic}, but there existed no explicit way to actually reach the expected solution. In a nutshell, the exact recovery of RMCDS remains a limit to be broken through.

In response, this paper presents, to the best of our knowledge, the very first theory guaranteeing that RMCDS problem is exactly solvable subject to arbitrary deterministic sampling. Specifically, we propose \textit{restricted approximate $\infty$-isometry property}. It is then established that any square matrix being \textit{low-rank} and \textit{incoherent} with a sampling pattern satisfying this property, \textit{isomerism} and \textit{relative well-conditionedness} can be exactly recovered via convex optimization from its corrupted sampled entries with probability at least $1-n^{-C_m}$, provided that the corruption happens with a uniformly small probability.

The remainder of this paper is organized as follows. Section \ref{sec:Notations} introduces the notation system of this paper. Section \ref{sec:formulation} presents the convex formulation of the RMCAS problem, and more importantly, discusses the model assumptions on which our recovery theory lies. Section \ref{sec:Identifiability} delivers our main theorem and provides an outline of its proof. Detailed mathematical deduction are contained in Section \ref{sec:Mathematical Proofs} while Section \ref{sec:Conclusion} concludes the paper.

\section{Notations}
\label{sec:Notations}
In this paper, non-boldface letters are used to denote scalars. Boldface upper and lowercase letters denote matrices and vectors, respectively. Specifically, we use $\boldsymbol{e}_i$ to denote unit vector where the $i\mbox{-}$th element is $1$ with others being zeroes and similarly, $\boldsymbol{E}_{ij}$ to denote matrix where the $ij\mbox{-}$th element is $1$ while others are zeroes. $D_{ij}$ means the $ij\mbox{-}$th element of the matrix $\boldsymbol{D}$. Besides, sans-serif letters like $\mathsf{O},\mathsf{W},\mathsf{V}$ denote sets, with $[n]\triangleq\left\{1,2,\cdots,n\right\}.$ Suppose $\mathsf{O}\in [n]\times[n]$ and its elements are independently selected from $[n]\times[n]$ subject to Bernoulli model with probability $\rho$, we will denote it with $\mathsf{O}\thicksim Ber(\rho).$ Moreover, with slight notation abuse, calligraphic letters denote linear spaces or linear operators. Conforming to this, we reserve $\mathcal{P}$ in particular to denote orthogonal projectors, and $\mathcal{P}_\mathcal{T}$ represents the orthogonal projection onto some linear space $\mathcal{T}$. $\mathcal{T}^\bot$ $\mathsf{O}^c$ are the orthogonal complement and complement of $\mathcal{T}$ and $\mathsf{O}$, respectively. $\mathcal{T}+\mathcal{W}$ gives the direct sum of two linear spaces.  $\mathcal{E}$ denotes expectation operator and $\mathcal{I}$ stands for identity transformation. $Pr\left(\cdot\right)$ returns the probability of certain event. $\displaystyle sgn(x)=\left\{\begin{array}{l}
	x/|x|,\,x\neq0\\
	0,\,x=0
\end{array}\right.$ denotes the symbolic function and is element-wise when imposed on matrix.

We use $\left\langle\cdot,\cdot\right\rangle$ to denote matrix inner product. $\boldsymbol{D}^*,\,\mathcal{P}^*$ are the Hermitian transposes of $\boldsymbol{D},\,\mathcal{P}$, respectively. $\left\|\cdot\right\|_2$ is the $l_2\mbox{-}$norm of a vector. $\left\|\cdot\right\|$ and $\left\|\cdot\right\|_F$ denote the operator norm and Frobenius norm of a matrix, respectively. Note that the operator norm of an operator $\mathcal{P}:\mathbb{R}^{M\times N}\rightarrow\mathbb{R}^{M'\times N'}$ is also denoted with $\left\|\cdot\right\|$, whch is defined as $\left\|\mathcal{P}\right\|=\sup_{\boldsymbol{D}}\,\frac{\left\lVert\mathcal{P}\left[\boldsymbol{D}\right]\right\lVert_{F}}{\left\lVert\boldsymbol{D}\right\lVert_{F}}$. $\left\|\boldsymbol{D}\right\|_\infty\triangleq \max_{ij}\,\mid D_{ij}\mid$ is defined as the $l_\infty\mbox{-}$norm of a matrix. $\left\|\cdot\right\|_*$ and $\left\|\cdot\right\|_1$ represents the nuclear norm and $l_1\mbox{-}$ norm, respectively.

\section{Problem Formulation}
\label{sec:formulation}
Suppose we have a low-rank matrix $\boldsymbol{L}_0\in\mathbb{R}^{n\times n}$ and a sparse matrix $\boldsymbol{S}_0\in\mathbb{R}^{n\times n}$. Denote $\boldsymbol{Y}=\boldsymbol{L}_0+\boldsymbol{S}_0$ as their superposition. In this paper we seek to identify $\boldsymbol{L}_0$ from partial observation of $\boldsymbol{Y}$, \textit{i.e.,} $\mathcal{P}_{\mathcal{O}}\left[\boldsymbol{Y}\right]$, where $\mathcal{O}$ denotes the subspace of matrices supported on some deterministic set $\mathsf{O}\subseteq [n]\times[n].$\footnote{Note that this will simultaneously identify $\mathcal{P}_{\mathcal{O}}\left[\boldsymbol{S}_0\right]$.} Due to the tractability and scalability that convex optimization exhibited in related problems (\textit{e.g.,} \cite{candes2010power,WOS:000272299900003,recht2011simpler,liu2019matrix,9851461,candes2011robust,li2013compressed}), we plan to formulate our RMCDS problem into convex optimization again. That is to say, the following optimization problem is to be discussed:
\begin{equation}
	\begin{aligned}
		&\min_{\boldsymbol{L},\boldsymbol{S}}\,\left\lVert\boldsymbol{L}\right\lVert_*+\lambda\left\lVert\boldsymbol{S}\right\lVert_1,\\
		s.t.&,\, \mathcal{P}_{\mathcal{O}}\left[\boldsymbol{Y}\right]=\mathcal{P}_{\mathcal{O}}\left[\boldsymbol{L}+\boldsymbol{S}\right].
	\end{aligned}
	\label{eq: Integrated Formulation}
	\nonumber
\end{equation}
To discuss its exact recovery conditions, below we introduce some concepts and premises about the low-rank matrix $\boldsymbol{L}_0$, sparse matrix $\boldsymbol{S}_0$ and the sampling set $\mathcal{O}$. For the sake of brevity, these premises will be taken for default without being further mentioned in the remaining sections, unless needed. 

\textbf{Tangent space at $\boldsymbol{L}_0$:} Suppose that $\boldsymbol{L}_0$ is with rank $r$ and compact Singular Value Decomposition (SVD) $\boldsymbol{L}_0=\boldsymbol{U}\boldsymbol{\Pi}\boldsymbol{V}^\ast$. Define $\mathcal{T}\triangleq\left\{\boldsymbol{U}\boldsymbol{R}^\ast+\boldsymbol{Q}\boldsymbol{V}^\ast\vert\boldsymbol{R},\boldsymbol{Q}\in\mathbb{R}^{n\times r}\right\}$ the tangent space to the set of rank-$r$ matrices at $\boldsymbol{L}_0$.

\textbf{Incoherence of $\boldsymbol{L}_0$:} There exists a constant $\nu\in[1,\frac{n}{r}],$ such that
$$\max\left\{\max_{i\in[n]}\left\lVert\boldsymbol{U}^*\boldsymbol{e}_i\right\lVert_2^2,\,\max_{i\in[n]}\left\lVert\boldsymbol{V}^*\boldsymbol{e}_i\right\lVert_2^2\right\}\leq \frac{\nu r}{n}$$ and $$\left\lVert\boldsymbol{U}\boldsymbol{V}^*\right\lVert_\infty\leq \frac{\sqrt{\nu r}}{n\sqrt{\log(n)}}.$$
Note that the second inequality is strengthened from $\left\lVert\boldsymbol{U}\boldsymbol{V}^*\right\lVert_\infty\leq \frac{\sqrt{\nu r}}{n}$ in \cite{candes2011robust,li2013compressed} for the more challenging deterministic case.

\textbf{Generation mechanism of $\boldsymbol{S}_0$:} Different from man-made sampling, the noise component $\boldsymbol{S}_0$ comes from mainly unpredictable non-human factors. Thus it is appropriate assuming its generation mechanism to be uniformly random, as in \cite{candes2011robust,li2013compressed,Wright_Ma_2022}. Specifically, we suppose that $\boldsymbol{S}_0$ is supported on $\mathsf{W}\sim Ber(\rho)\subseteq [n]\times[n]$ and denote that  $\overline{\boldsymbol{S}}_0=\mathcal{P}_{\mathcal{O}}\left[\boldsymbol{S}_0\right],\,\overline{\boldsymbol{\Sigma}}_0=sgn\left(\overline{\boldsymbol{S}}_0\right),$ while the non-zero entries in $\overline{\boldsymbol{\Sigma}}_0$ are assumed as Rademacher $\pm1$ random variables. Additionally, for convenience we denote $\mathsf{V}=\mathsf{O}\cap\mathsf{W}$ on which those observed corrupted entries are supported and $\mathsf{N}=\mathsf{O}\cap\mathsf{W}^\bot$ on which those observed clean entries settle. $\mathcal{V}$ and $\mathcal{N}$ are then denoted as the linear subspaces in which the matrices are supported on $\mathsf{V}$ and $\mathsf{N}$, respectively.

\textbf{Generation mechanism of $\boldsymbol{S}_0$:} Different from man-made sampling, the noise component $\boldsymbol{S}_0$ comes from mainly unpredictable non-human factors. Thus it is appropriate assuming its generation mechanism to be uniformly random, as in \cite{candes2011robust,li2013compressed,Wright_Ma_2022}. Specifically, we suppose that $\boldsymbol{S}_0$ is supported on $\mathsf{W}\sim Ber(\rho)\subseteq [n]\times[n]$ and denote that  $\overline{\boldsymbol{S}}_0=\mathcal{P}_{\mathcal{O}}\left[\boldsymbol{S}_0\right],\,\overline{\boldsymbol{\Sigma}}_0=sgn\left(\overline{\boldsymbol{S}}_0\right),$ while the non-zero entries in $\overline{\boldsymbol{\Sigma}}_0$ are assumed as Rademacher $\pm1$ random variables. Additionally, for convenience we denote $\mathsf{V}=\mathsf{O}\cap\mathsf{W}$ on which those observed corrupted entries are supported and $\mathsf{N}=\mathsf{O}\cap\mathsf{W}^\bot$ on which those observed clean entries settle. $\mathcal{V}$ and $\mathcal{N}$ are then denoted as the linear subspaces in which the matrices are supported on $\mathsf{V}$ and $\mathsf{N}$, respectively.

\textbf{Isomerism and relative well-conditionedness:} In regard to the sampling pattern, unlike the random-sampling case in which the establishment of identifiabiliy only involves the sampling rate, an identifiable deterministic sampling pattern is $\boldsymbol{L}_0$-dependent. That is, the 'ideal' sampling pattern varies with the desired $\boldsymbol{L}_0$. Thus, we require the sampling pattern to enable $\boldsymbol{L}_0$ to be $\mathsf{O}/\mathsf{O}^T$-isomeric with $\gamma_{\mathsf{O},\mathsf{O}^T}\left(\boldsymbol{L}_0\right)>\frac{3}{4}$, as required by \cite{liu2019matrix}. However, it is not that random sampling is totally independent of the desired matrix. On the contrary, we don't need to additionally suppose the mutual dependence between $\boldsymbol{L}_0$ and random sampling because such dependence comes from the nature of random sampling, as shown by Thm 3.1 and 3.3 in \cite{liu2019matrix}.
\section{Exact Identifiability Theories}
\label{sec:Identifiability}
This section delivers our main theorem regarding the exact recovery of the RMCDS problem via convex optimization formulated by the conventional nuclear norm and $l_1$-norm. Concretely, on what conditions does the solution $\left(\boldsymbol{L}_\star,\boldsymbol{S}_\star\right)$ to
\begin{equation}
	\begin{aligned}
		&\min_{\boldsymbol{L},\boldsymbol{S}}\,\left\lVert\boldsymbol{L}\right\lVert_*+\lambda\left\lVert\boldsymbol{S}\right\lVert_1,\\
		s.t.&,\, \mathcal{P}_{\mathcal{O}}\left[\boldsymbol{Y}\right]=\mathcal{P}_{\mathcal{O}}\left[\boldsymbol{L}+\boldsymbol{S}\right].
	\end{aligned}
	\label{eq: NN Formulation}
\end{equation}
satisfy
$$\boldsymbol{L}_\star=\boldsymbol{L}_0,\,\boldsymbol{S}_\star=\overline{\boldsymbol{S}}_0.\footnote{It is impossible to recover the whole $\boldsymbol{S}_0$ in any sense. So we only desire the recovery of $\overline{\boldsymbol{S}}_0$. This is in accordance with any related researches.}$$
\subsection{Restricted Approximate $\infty$-Isometry Property}
To begin with, we propose a new concept called Restricted Approximate $\infty$-Isometry Property (RAIIP, the first I stands for Infinity), which starts with the restricted $\infty$-norm.
\begin{definition}[Restricted $\infty$-norm of linear operators]
	\textit{Given a linear transformation $\mathcal{P}:\,\mathbb{R}^{n\times n}\rightarrow\mathbb{R}^{n\times n}$, its restricted $\infty$-norm, denoted by $\left\lVert\mathcal{P}\right\lVert_{\mathcal{T},\infty}$, is induced from the matrix $l_\infty$-norm while restricted on some linear space $\mathcal{T}$, written as
		$$\left\lVert\mathcal{P}\right\lVert_{\mathcal{T},\infty}=\sup_{\boldsymbol{D}\in\mathcal{T}\subseteq\mathbb{R}^{n\times n}}\,\frac{\left\lVert\mathcal{P}\left[\boldsymbol{D}\right]\right\lVert_{\infty}}{\left\lVert\boldsymbol{D}\right\lVert_{\infty}}.$$}
	\label{Def: Restricted infty-norm}
\end{definition}
\noindent
Similar to the restriction enforced by Thm 4.1 in \cite{WOS:000272299900003} for MC problem, imposing a bound on the restricted $\infty$-norm of $\mathcal{P}_{\mathcal{T}}\mathcal{P}_{\mathcal{O}^\bot}$, we obatin RAIIP.
\begin{definition}[Restricted Approximate $\infty$-Isometry Property]
	\textit{On the premises in Section \ref{sec:formulation}, if it holds that
		$$\left\lVert\mathcal{P}_{\mathcal{T}}\mathcal{P}_{\mathcal{O}^\bot}\right\lVert_{\mathcal{T},\infty}<1,$$ then we will say that $\mathcal{P}_{\mathcal{O}}$ satisfies RAIIP.}
	\label{Def: RAIIP}
\end{definition}
\noindent
This condition implies that $\left\lVert\mathcal{P}_{\mathcal{T}}-\mathcal{P}_{\mathcal{T}}\mathcal{P}_{\mathcal{O}}\right\lVert_{\mathcal{T},\infty}$ is not too large, which happens if $\mathcal{P}_{\mathcal{O}}$ 'approximately' preserves the length (in terms of $l_\infty$-norm) of any elements in $\mathcal{T}$. Namely, $\mathcal{P}_{\mathcal{O}}$ is 'approximately' isometric restricted on $\mathcal{T}$. 
\subsection{Main Theorem}
With RAIIP at hand, we now present our main identifiability theorem.
\begin{theorem}
	\textit{With the premises in Section \ref{sec:formulation}, if $\mathcal{P}_{\mathcal{O}}$ satisfies RAIIP, additionally if  $\rho<\min\left\{\rho_s,1-C\frac{\nu r\log^2(n)}{n}\right\}$ with $C$ some positive numerical constant, then there exists another constant $C_m>0$, such that $\left(\boldsymbol{L}_0,\overline{\boldsymbol{S}}_0\right)$ is the unique solution to problem \eqref{eq: NN Formulation} with probability at least $1-n^{-C_m}$, provided that $\lambda=\frac{1}{\sqrt{n\log(n)}}$ and $\rho_s$ is small enough.}
	\label{Th: Main Theorem}
\end{theorem}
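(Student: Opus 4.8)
The plan is to certify optimality of $\left(\boldsymbol{L}_0,\overline{\boldsymbol{S}}_0\right)$ by the primal--dual subgradient route, reducing the theorem to the existence of a dual certificate. First I would write the first-order conditions for problem \eqref{eq: NN Formulation}: since the equality constraint acts only through $\mathcal{P}_{\mathcal{O}}$, the dual variable $\boldsymbol{Y}$ must lie in the range of $\mathcal{P}_{\mathcal{O}}$, and comparing the nuclear-norm subgradient $\boldsymbol{U}\boldsymbol{V}^*+\mathcal{P}_{\mathcal{T}^\bot}(\cdot)$ with the $l_1$ subgradient $\lambda\overline{\boldsymbol{\Sigma}}_0+\lambda\mathcal{P}_{\mathcal{V}^\bot}(\cdot)$ at $\left(\boldsymbol{L}_0,\overline{\boldsymbol{S}}_0\right)$ shows that exact, unique recovery follows once one exhibits a $\boldsymbol{Y}=\mathcal{P}_{\mathcal{O}}\left[\boldsymbol{Y}\right]$ obeying (approximately) the four conditions $\mathcal{P}_{\mathcal{T}}\boldsymbol{Y}=\boldsymbol{U}\boldsymbol{V}^*$, $\left\lVert\mathcal{P}_{\mathcal{T}^\bot}\boldsymbol{Y}\right\rVert<1$, $\mathcal{P}_{\mathcal{V}}\boldsymbol{Y}=\lambda\overline{\boldsymbol{\Sigma}}_0$, and $\left\lVert\mathcal{P}_{\mathcal{N}}\boldsymbol{Y}\right\rVert_\infty<\lambda$, together with a transversality condition guaranteeing that the only feasible perturbation lying simultaneously in $\mathcal{T}$ and $\mathcal{V}$ is zero. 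The rest of the proof is then to verify the transversality and to construct such a $\boldsymbol{Y}$.

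For the transversality I would argue in two layers. The deterministic backbone---invertibility and conditioning of $\mathcal{P}_{\mathcal{T}}\mathcal{P}_{\mathcal{O}}\mathcal{P}_{\mathcal{T}}$ on $\mathcal{T}$, which makes $\boldsymbol{L}_0$ identifiable from its clean samples---is furnished by the isomerism level $\gamma_{\mathsf{O},\mathsf{O}^T}\left(\boldsymbol{L}_0\right)>\frac{3}{4}$ and relative well-conditionedness exactly as in \cite{liu2019matrix}, whereas RAIIP (Definition \ref{Def: RAIIP}) supplies the same control in the $l_\infty$ geometry through $\mathcal{P}_{\mathcal{T}}=\mathcal{P}_{\mathcal{T}}\mathcal{P}_{\mathcal{O}}\mathcal{P}_{\mathcal{T}}+\mathcal{P}_{\mathcal{T}}\mathcal{P}_{\mathcal{O}^\bot}\mathcal{P}_{\mathcal{T}}$ with the second summand of restricted $\infty$-norm below $1$. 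The random layer controls $\mathsf{V}=\mathsf{O}\cap\mathsf{W}$: because $\mathsf{W}\sim Ber(\rho)$ with $\rho$ small and $\mathcal{T}$ incoherent, an operator-Bernstein concentration argument gives $\left\lVert\mathcal{P}_{\mathcal{T}}\mathcal{P}_{\mathcal{V}}\mathcal{P}_{\mathcal{T}}\right\rVert<1$ with probability at least $1-n^{-C_m}$, yielding $\mathcal{T}\cap\mathcal{V}=\left\{\boldsymbol{0}\right\}$ and the required transversality.

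For the certificate I would split $\boldsymbol{Y}=\boldsymbol{Y}^L+\boldsymbol{Y}^S$. The low-rank part $\boldsymbol{Y}^L$ I would build by a golfing scheme carried out on the clean observed set $\mathsf{N}=\mathsf{O}\cap\mathsf{W}^\bot$: partitioning the Bernoulli-thinned $\mathsf{N}$ into independent batches $\mathsf{N}_1,\dots,\mathsf{N}_j$ with projectors $\mathcal{P}_{\mathcal{N}_k}$, I iterate a correction whose residual $\boldsymbol{U}\boldsymbol{V}^*-\mathcal{P}_{\mathcal{T}}\boldsymbol{Y}^L$ contracts geometrically while each step's contribution to $\left\lVert\mathcal{P}_{\mathcal{T}^\bot}\boldsymbol{Y}^L\right\rVert$ and to $\left\lVert\boldsymbol{Y}^L\right\rVert_\infty$ stays controlled; here RAIIP together with the Bernoulli thinning supplies the per-step contraction of the residual in $\mathcal{T}$ that ordinary random-sampling golfing obtains from the sampling randomness alone. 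The sparse part I would take from a Neumann-series object built on $\left(\mathcal{I}-\mathcal{P}_{\mathcal{V}}\mathcal{P}_{\mathcal{T}}\mathcal{P}_{\mathcal{V}}\right)^{-1}$, so that $\mathcal{P}_{\mathcal{V}}\boldsymbol{Y}^S=\lambda\overline{\boldsymbol{\Sigma}}_0$ exactly and $\mathcal{P}_{\mathcal{T}}\boldsymbol{Y}^S$ is negligible, after which I would reconcile the cross terms $\mathcal{P}_{\mathcal{V}}\boldsymbol{Y}^L$ and $\mathcal{P}_{\mathcal{T}}\boldsymbol{Y}^S$ so that all four conditions hold jointly.

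I expect the main obstacle to be the $l_\infty$ control $\left\lVert\mathcal{P}_{\mathcal{N}}\boldsymbol{Y}\right\rVert_\infty<\lambda$ at the prescribed $\lambda=\frac{1}{\sqrt{n\log(n)}}$. Because the sampling is deterministic, the golfing concentration cannot draw on sampling randomness and must be driven purely by the corruption randomness through RAIIP, which tightens every $l_\infty$ estimate; this is precisely why the incoherence bound on $\left\lVert\boldsymbol{U}\boldsymbol{V}^*\right\rVert_\infty$ was strengthened by a $\sqrt{\log(n)}$ factor relative to \cite{candes2011robust,li2013compressed}. Matching that strengthening against the smaller $\lambda$, while simultaneously keeping $\left\lVert\mathcal{P}_{\mathcal{T}^\bot}\boldsymbol{Y}\right\rVert<1$ and surviving the final union bound over the batches and the sign randomness, is the delicate balance on which the whole argument and the probability $1-n^{-C_m}$ rest.
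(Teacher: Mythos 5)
You follow the paper's architecture quite closely---a subgradient/KKT reduction to an approximate dual certificate in $\mathcal{O}$, golfing over Bernoulli-thinned clean entries for the low-rank half, a Neumann series for the sparse half---but two of your steps fail as stated. First, your transversality target is the wrong one. Uniqueness must exclude perturbations $\boldsymbol{P}\in\mathcal{T}$ whose observed part is entirely corrupted, i.e. $\mathcal{P}_{\mathcal{N}}[\boldsymbol{P}]=0$; since such a $\boldsymbol{P}$ may also live on the unobserved set, the needed condition is $\mathcal{T}\cap\mathcal{N}^\bot=\{\boldsymbol{0}\}$ with $\mathcal{N}^\bot=\mathcal{O}^\bot+\mathcal{V}$, not $\mathcal{T}\cap\mathcal{V}=\{\boldsymbol{0}\}$. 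Your two layers give $\mathcal{T}\cap\mathcal{O}^\bot=\{\boldsymbol{0}\}$ (from isomerism) and $\mathcal{T}\cap\mathcal{V}=\{\boldsymbol{0}\}$ (from Bernstein) separately, but intersections do not distribute over sums of subspaces, so these two facts alone do not yield the requirement; a tangent-space matrix hidden on $\mathsf{O}^c\cup\mathsf{V}$ is precisely the dangerous object in deterministic sampling. The paper resolves this quantitatively in Prop \ref{Prop:Aux3}: applying Prop \ref{Prop:Aux1} with $\mathcal{M}=\mathcal{W}^\bot$ makes $\mathcal{P}_{\mathcal{T}}\mathcal{P}_{\mathcal{N}}\mathcal{P}_{\mathcal{T}}$ well conditioned on $\mathcal{T}$ and gives $\left\lVert\mathcal{P}_{\mathcal{T}}[\boldsymbol{P}]\right\rVert_F\leq(n+1)\left\lVert\mathcal{P}_{\mathcal{N}}[\boldsymbol{P}]\right\rVert_F+n\left\lVert\mathcal{P}_{\mathcal{T}^\bot}[\boldsymbol{P}]\right\rVert_F$; this quantitative form, and not a bare intersection statement, is also what absorbs the $\frac{1}{n^2}$ slack you allow in $\left\lVert\mathcal{P}_{\mathcal{T}}[\boldsymbol{\Lambda}]-\boldsymbol{U}\boldsymbol{V}^\ast\right\rVert_F$.

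Second, your sparse certificate is infeasible as a dual variable: $\lambda\left(\mathcal{I}-\mathcal{P}_{\mathcal{T}}\right)\sum_{i}\left(\mathcal{P}_{\mathcal{V}}\mathcal{P}_{\mathcal{T}}\mathcal{P}_{\mathcal{V}}\right)^i\left[\overline{\boldsymbol{\Sigma}}_0\right]$ is in general not supported on $\mathsf{O}$, contradicting your own (correct) requirement $\boldsymbol{Y}=\mathcal{P}_{\mathcal{O}}[\boldsymbol{Y}]$, and projecting it back onto $\mathcal{O}$ destroys both $\mathcal{P}_{\mathcal{T}}[\cdot]=0$ and $\mathcal{P}_{\mathcal{V}}[\cdot]=\lambda\overline{\boldsymbol{\Sigma}}_0$. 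No post-hoc ``reconciliation of cross terms'' restores all four conditions simultaneously; the paper's device is to run the Neumann series with $\mathcal{P}_{(\mathcal{T}+\mathcal{O}^\bot)}$ in place of $\mathcal{P}_{\mathcal{T}}$, so that $\boldsymbol{\Lambda}_S$ annihilates $\mathcal{T}$ and $\mathcal{O}^\bot$ at once. Convergence of that series hinges on $\left\lVert\mathcal{P}_{\mathcal{V}}\mathcal{P}_{(\mathcal{T}+\mathcal{O}^\bot)}\mathcal{P}_{\mathcal{V}}\right\rVert\leq2\rho_s$, obtained by combining $\left\lVert\mathcal{P}_{\mathcal{V}}\mathcal{P}_{\mathcal{T}}\right\rVert^2\leq\rho_s$ (Cor \ref{Cor:Aux1}) with $\left\lVert\left(\mathcal{P}_{\mathcal{T}}\mathcal{P}_{\mathcal{O}}\mathcal{P}_{\mathcal{T}}\right)^{-1}\right\rVert\leq2$ from isomerism and relative well-conditionedness---this is exactly where the deterministic-sampling conditions enter the sparse half of the certificate, and your sketch has no substitute for it. The rest of your outline (golfing driven by the corruption randomness with RAIIP supplying the $l_\infty$ contraction, and the $\sqrt{\log(n)}$-strengthened incoherence matching $\lambda=\frac{1}{\sqrt{n\log(n)}}$) does track the paper's proof faithfully.
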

\subsection{Proof Architecture of the Main Theorem}
In the sequel, the proof of Thm \ref{Th: Main Theorem} is delivered in several steps. Firstly, by simply inspecting the KKT conditions of \eqref{eq: NN Formulation}, it is concluded that, if there exists a dual multiplier $\boldsymbol{\Lambda}\in\mathcal{O}$ such that
\begin{equation}
	\begin{aligned}
		&\mathcal{P}_{\mathcal{T}}\left[\boldsymbol{\Lambda}\right]=\boldsymbol{U}\boldsymbol{V}^\ast,\,&\left\lVert\mathcal{P}_{\mathcal{T}^\bot}\left[\boldsymbol{\Lambda}\right]\right\lVert< 1,
		\\
		&\mathcal{P}_{\mathcal{V}}\left[\boldsymbol{\Lambda}\right]=\lambda\overline{\boldsymbol{\Sigma}}_0,\,&\left\lVert\mathcal{P}_{\mathcal{N}}\left[\boldsymbol{\Lambda}\right]\right\lVert_\infty<\lambda.
	\end{aligned}
	\label{eq:Strict KKT}
\end{equation}
Then problem \eqref{eq: NN Formulation} is uniquely solvable conditioned on $\left\lVert\mathcal{P}_{\mathcal{N}^\bot}\mathcal{P}_{\mathcal{T}}\right\lVert<1.$ Unfortunately, the equalities in \eqref{eq:Strict KKT} are too strict for one to establish a qualified $\boldsymbol{\Lambda}.$ Thus, \eqref{eq:Strict KKT} is relaxed into \eqref{eq:Relaxed KKT} in Lemma \ref{Le: Relaxed KKT}.
\begin{lemma}
	\textit{Subject to the settings of Thm \ref{Th: Main Theorem}, $\left(\boldsymbol{L}_0,\overline{\boldsymbol{S}}_0\right)$ is the unique solution to problem \eqref{eq: NN Formulation} if there exists $\boldsymbol{\Lambda}\in\mathcal{O}$ such that
		\begin{equation}
			\begin{aligned}
				&\left\lVert\mathcal{P}_{\mathcal{T}}\left[\boldsymbol{\Lambda}\right]-\boldsymbol{U}\boldsymbol{V}^\ast\right\lVert_F< \frac{1}{n^2},\,&\left\lVert\mathcal{P}_{\mathcal{T}^\bot}\left[\boldsymbol{\Lambda}\right]\right\lVert< \frac{1}{2},
				\\
				&\mathcal{P}_{\mathcal{V}}\left[\boldsymbol{\Lambda}\right]=\lambda\overline{\boldsymbol{\Sigma}}_0,\,&\left\lVert\mathcal{P}_{\mathcal{N}}\left[\boldsymbol{\Lambda}\right]\right\lVert_\infty< \frac{\lambda}{2}.
			\end{aligned}
			\label{eq:Relaxed KKT}
	\end{equation}}
	\label{Le: Relaxed KKT}
\end{lemma}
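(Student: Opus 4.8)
The plan is to establish Lemma~\ref{Le: Relaxed KKT} by showing that the relaxed conditions in \eqref{eq:Relaxed KKT} are sufficient to certify optimality and uniqueness, thereby inheriting the conclusion of the strict KKT certificate \eqref{eq:Strict KKT}. I would begin from the convexity of the objective $\left\lVert\boldsymbol{L}\right\lVert_*+\lambda\left\lVert\boldsymbol{S}\right\lVert_1$. For any feasible perturbation $\left(\boldsymbol{L}_0+\boldsymbol{H},\overline{\boldsymbol{S}}_0-\boldsymbol{H}\right)$ staying on the constraint set $\mathcal{P}_{\mathcal{O}}\left[\boldsymbol{L}+\boldsymbol{S}\right]=\mathcal{P}_{\mathcal{O}}\left[\boldsymbol{Y}\right]$, the perturbation must satisfy $\mathcal{P}_{\mathcal{O}}\left[\boldsymbol{H}\right]$ being absorbed consistently; I would use the subgradient inequalities for the nuclear norm and the $l_1$-norm to lower-bound the objective change. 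The standard subgradients are $\boldsymbol{U}\boldsymbol{V}^\ast+\boldsymbol{W}$ with $\boldsymbol{W}\in\mathcal{T}^\bot$, $\left\lVert\boldsymbol{W}\right\lVert\leq1$ for the nuclear norm, and $\lambda\overline{\boldsymbol{\Sigma}}_0+\lambda\boldsymbol{F}$ with $\boldsymbol{F}$ supported off $\mathsf{V}$, $\left\lVert\boldsymbol{F}\right\lVert_\infty\leq1$ for the $l_1$-term.

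The core step is to choose the free components $\boldsymbol{W}$ and $\boldsymbol{F}$ adversarially to extract the sharpest lower bound, then replace the exact dual equalities by the approximate ones using $\boldsymbol{\Lambda}$. Concretely, I would write the objective gap as a sum of an exact-certificate term plus residual terms controlled by the slacks $\left\lVert\mathcal{P}_{\mathcal{T}}\left[\boldsymbol{\Lambda}\right]-\boldsymbol{U}\boldsymbol{V}^\ast\right\lVert_F$, $\left(\tfrac{1}{2}-\left\lVert\mathcal{P}_{\mathcal{T}^\bot}\left[\boldsymbol{\Lambda}\right]\right\lVert\right)$, and $\left(\tfrac{\lambda}{2}-\left\lVert\mathcal{P}_{\mathcal{N}}\left[\boldsymbol{\Lambda}\right]\right\lVert_\infty\right)$. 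Since $\boldsymbol{\Lambda}\in\mathcal{O}$ and $\left\langle\boldsymbol{\Lambda},\mathcal{P}_{\mathcal{O}}\left[\boldsymbol{H}\right]\right\rangle=\left\langle\boldsymbol{\Lambda},\boldsymbol{H}\right\rangle$ can be decomposed across $\mathcal{T},\mathcal{T}^\bot,\mathcal{V},\mathcal{N}$, the inner-product terms telescope: the $\mathcal{V}$-part cancels against the sign certificate, and the remaining pieces are dominated by the margins $\tfrac12$ and $\tfrac{\lambda}{2}$. The slack $\tfrac{1}{n^2}$ on the Frobenius discrepancy is chosen precisely so that its contribution, once bounded via Cauchy--Schwarz against $\left\lVert\boldsymbol{H}\right\lVert_F$ (which is polynomially bounded on the relevant feasible directions), is negligible against the strict margins. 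This yields a strictly positive objective increase for every nonzero feasible $\boldsymbol{H}$, giving both optimality and uniqueness.

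For uniqueness, I would invoke the side condition $\left\lVert\mathcal{P}_{\mathcal{N}^\bot}\mathcal{P}_{\mathcal{T}}\right\lVert<1$ already flagged in the proof architecture: it guarantees $\mathcal{T}\cap\mathcal{N}=\{\boldsymbol{0}\}$ in the relevant sense so that a nonzero perturbation cannot simultaneously lie in the degenerate intersection where the lower bound becomes vacuous. This rules out the pathological direction in which both subgradient inequalities are tight, forcing $\boldsymbol{H}=\boldsymbol{0}$.

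The main obstacle I anticipate is the bookkeeping that converts the strict inequalities of \eqref{eq:Strict KKT} into the quantitatively relaxed slacks of \eqref{eq:Relaxed KKT} without losing the strict positivity of the objective gap; in particular, correctly propagating the $\tfrac{1}{n^2}$ Frobenius slack through a Cauchy--Schwarz estimate requires an a~priori polynomial bound on $\left\lVert\boldsymbol{H}\right\lVert_F$ for feasible perturbations, which must be established from the constraint structure and the condition $\left\lVert\mathcal{P}_{\mathcal{N}^\bot}\mathcal{P}_{\mathcal{T}}\right\lVert<1$. Closing this gap cleanly—so that the net margin remains bounded below by a positive constant times $\left(\left\lVert\mathcal{P}_{\mathcal{T}^\bot}\boldsymbol{H}\right\lVert_*+\left\lVert\mathcal{P}_{\mathcal{N}}\boldsymbol{H}\right\lVert_1\right)$—is the technical crux of the argument.
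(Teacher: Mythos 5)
Your proposal follows essentially the same route as the paper's own proof: the subgradient inequality with adversarially chosen free components ($\overline{\boldsymbol{U}}\overline{\boldsymbol{V}}^\ast\in\mathcal{T}^\bot$ and $-sgn(\mathcal{P}_{\mathcal{N}}[\boldsymbol{P}])$), insertion of $\boldsymbol{\Lambda}\in\mathcal{O}$ so that the $\mathcal{V}$-part cancels and the remaining pieces are dominated by the margins $\tfrac12$ and $\tfrac{\lambda}{2}$, and absorption of the $\tfrac{1}{n^2}$ Frobenius slack via a relative polynomial bound, which is precisely what the paper's Prop \ref{Prop:Aux3} supplies, namely $\|\mathcal{P}_{\mathcal{T}}[\boldsymbol{P}]\|_F\le(n+1)\|\mathcal{P}_{\mathcal{N}}[\boldsymbol{P}]\|_F+n\|\mathcal{P}_{\mathcal{T}^\bot}[\boldsymbol{P}]\|_F$ together with $\mathcal{N}^\bot\cap\mathcal{T}=\{\boldsymbol{0}\}$ (note the bound must be relative to $\|\mathcal{P}_{\mathcal{N}}[\boldsymbol{P}]\|_F$ and $\|\mathcal{P}_{\mathcal{T}^\bot}[\boldsymbol{P}]\|_F$, since no absolute bound on $\|\boldsymbol{H}\|_F$ exists for feasible perturbations). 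One small correction: the degenerate set you must rule out for uniqueness is $\mathcal{T}\cap\mathcal{N}^\bot$, not $\mathcal{T}\cap\mathcal{N}$; with that fixed, your sketch matches the paper's argument.
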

\begin{proof}
	\textit{Deferred to Section \ref{sec:Mathematical Proofs}.}
\end{proof}
\noindent
Thereupon, the problem reduces to the construction of the dual certificate $\boldsymbol{\Lambda}$ satisfying \eqref{eq:Relaxed KKT}, which is answered in steps by Lemma \ref{Le: Certificate1} and \ref{Le: Certificate2}. In \cite{candes2011robust}, similar certificates are constructed via golfing scheme \cite{gross2011recovering} and least-square scheme, and our work makes no exception. However, due to the different setting in the sampling mechanism, the first part of the dual certificate (corresponding to Lemma \ref{Le: Certificate1}) requires an adapted construction technique. Though the construction of the second part (corresponding to Lemma \ref{Le: Certificate1}) remains the same, it is non-trivial for us to show that these properties are still satisfied in the deterministic case.
\begin{lemma}
	\textit{Subject to the settings of Thm \ref{Th: Main Theorem}, there exists $\boldsymbol{\Lambda}_L\in\mathcal{N}$ such that with very high probability,
		\begin{equation}
			\begin{aligned}
				&\left\lVert\mathcal{P}_{\mathcal{T}}\left[\boldsymbol{\Lambda}_L\right]-\boldsymbol{U}\boldsymbol{V}^\ast\right\lVert_F< \frac{1}{n^2},\\
				&\left\lVert\mathcal{P}_{\mathcal{T}^\bot}\left[\boldsymbol{\Lambda}_L\right]\right\lVert<\frac{1}{4},\\
				&\left\lVert\boldsymbol{\Lambda}_L\right\lVert_\infty<\frac{\lambda}{4}.
			\end{aligned}
			\nonumber
	\end{equation}}
	\label{Le: Certificate1}
\end{lemma}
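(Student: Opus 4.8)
The plan is to construct $\boldsymbol{\Lambda}_L$ by a golfing scheme adapted to deterministic sampling. Since the corruption support $\mathsf{W}\sim Ber(\rho)$ is the only randomness, the clean observed set $\mathsf{N}=\mathsf{O}\cap\mathsf{W}^\bot$ behaves like the deterministic $\mathsf{O}$ thinned independently at rate $1-\rho$. I would therefore realize $\mathsf{N}$ as a union of $j_0=\lceil C'\log(n)\rceil$ independent batches $\Omega_1,\dots,\Omega_{j_0}$, each sampling every entry of $\mathsf{O}$ with probability $q$ chosen so that $(1-q)^{j_0}\ge\rho$, i.e. $q\approx(1-\rho)/j_0$, which couples $\bigcup_j\Omega_j\subseteq\mathsf{N}$. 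Setting $\boldsymbol{Z}_0=\boldsymbol{U}\boldsymbol{V}^\ast$, $\boldsymbol{Y}_j=\boldsymbol{Y}_{j-1}+q^{-1}\mathcal{P}_{\Omega_j}\boldsymbol{Z}_{j-1}$ and $\boldsymbol{Z}_j=\boldsymbol{U}\boldsymbol{V}^\ast-\mathcal{P}_{\mathcal{T}}\boldsymbol{Y}_j$, I take $\boldsymbol{\Lambda}_L=\boldsymbol{Y}_{j_0}$, which lies in $\mathcal{N}$ by construction. The essential departure from the random-sampling analyses of \cite{candes2011robust,li2013compressed} is that here $\mathcal{E}[q^{-1}\mathcal{P}_{\Omega_j}]=\mathcal{P}_{\mathcal{O}}$ rather than $\mathcal{I}$, so the residual contracts only towards $\mathcal{O}$: $\mathcal{E}[\boldsymbol{Z}_j\mid\boldsymbol{Z}_{j-1}]=\mathcal{P}_{\mathcal{T}}\mathcal{P}_{\mathcal{O}^\bot}\mathcal{P}_{\mathcal{T}}\boldsymbol{Z}_{j-1}$, and this is exactly where the sampling premises must intervene.

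Next I would establish two per-batch contraction estimates. In Frobenius norm, $\left\lVert\boldsymbol{Z}_j\right\rVert_F\le\left\lVert\mathcal{P}_{\mathcal{T}}-q^{-1}\mathcal{P}_{\mathcal{T}}\mathcal{P}_{\Omega_j}\mathcal{P}_{\mathcal{T}}\right\rVert\left\lVert\boldsymbol{Z}_{j-1}\right\rVert_F$; since the expectation of that operator is $\mathcal{P}_{\mathcal{T}}\mathcal{P}_{\mathcal{O}^\bot}\mathcal{P}_{\mathcal{T}}$, whose spectral norm is strictly below $1$ by isomerism and relative well-conditionedness, a matrix Bernstein bound on the mean-zero fluctuation $q^{-1}\mathcal{P}_{\mathcal{T}}\mathcal{P}_{\Omega_j}\mathcal{P}_{\mathcal{T}}-\mathcal{P}_{\mathcal{T}}\mathcal{P}_{\mathcal{O}}\mathcal{P}_{\mathcal{T}}$ yields a factor $\theta_F<1$ once $q\gtrsim\nu r\log(n)/n$. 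In $l_\infty$-norm the deterministic part is governed by RAIIP: for $\boldsymbol{Z}_{j-1}\in\mathcal{T}$, $\left\lVert\mathcal{P}_{\mathcal{T}}\mathcal{P}_{\mathcal{O}^\bot}\mathcal{P}_{\mathcal{T}}\boldsymbol{Z}_{j-1}\right\rVert_\infty\le\left\lVert\mathcal{P}_{\mathcal{T}}\mathcal{P}_{\mathcal{O}^\bot}\right\rVert_{\mathcal{T},\infty}\left\lVert\boldsymbol{Z}_{j-1}\right\rVert_\infty$ with $\left\lVert\mathcal{P}_{\mathcal{T}}\mathcal{P}_{\mathcal{O}^\bot}\right\rVert_{\mathcal{T},\infty}<1$, and an entrywise Bernstein bound on the fluctuation gives $\left\lVert\boldsymbol{Z}_j\right\rVert_\infty\le\theta_\infty\left\lVert\boldsymbol{Z}_{j-1}\right\rVert_\infty$ with $\theta_\infty<1$. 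A union bound over the $j_0=O(\log n)$ batches makes both hold with probability at least $1-n^{-C_m}$, and the sampling requirement $q\gtrsim\nu r\log(n)/n$ translates, through $q\approx(1-\rho)/j_0$, into precisely the hypothesis $\rho<1-C\nu r\log^2(n)/n$. The Frobenius estimate then gives $\left\lVert\mathcal{P}_{\mathcal{T}}\boldsymbol{\Lambda}_L-\boldsymbol{U}\boldsymbol{V}^\ast\right\rVert_F=\left\lVert\boldsymbol{Z}_{j_0}\right\rVert_F\le\theta_F^{j_0}\sqrt r<n^{-2}$ for $C'$ large enough, settling the first inequality.

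For the $l_\infty$ bound I would use $\left\lVert\boldsymbol{\Lambda}_L\right\rVert_\infty\le q^{-1}\sum_j\left\lVert\mathcal{P}_{\Omega_j}\boldsymbol{Z}_{j-1}\right\rVert_\infty\le q^{-1}(1-\theta_\infty)^{-1}\left\lVert\boldsymbol{U}\boldsymbol{V}^\ast\right\rVert_\infty$ together with the strengthened incoherence $\left\lVert\boldsymbol{U}\boldsymbol{V}^\ast\right\rVert_\infty\le\sqrt{\nu r}/(n\sqrt{\log n})$ and $\lambda=1/\sqrt{n\log n}$. The extra $1/\sqrt{\log n}$ deliberately inserted into the incoherence premise is exactly what compensates the $q^{-1}=O(\log n)$ loss, so that this stays below $\lambda/4$ precisely under the stated smallness of $\nu r\log^2(n)/n$; with the original incoherence one would instead need a $\log^3(n)$ factor, which clarifies why the strengthening is necessary in the deterministic case.

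The main obstacle is the spectral bound $\left\lVert\mathcal{P}_{\mathcal{T}^\bot}\boldsymbol{\Lambda}_L\right\rVert<1/4$. Writing $\mathcal{P}_{\mathcal{T}^\bot}\boldsymbol{\Lambda}_L=\sum_j\mathcal{P}_{\mathcal{T}^\bot}(q^{-1}\mathcal{P}_{\Omega_j}-\mathcal{I})\boldsymbol{Z}_{j-1}$ and inserting the mean $\mathcal{P}_{\mathcal{O}}$ splits it into a mean-zero random part $\sum_j\mathcal{P}_{\mathcal{T}^\bot}(q^{-1}\mathcal{P}_{\Omega_j}-\mathcal{P}_{\mathcal{O}})\boldsymbol{Z}_{j-1}$, which matrix Bernstein controls by a geometric sum of order $\sqrt{n/q}\,\left\lVert\boldsymbol{Z}_{j-1}\right\rVert_\infty$ (negligible up to logarithmic factors under the same regime), and a genuinely new deterministic term $-\sum_j\mathcal{P}_{\mathcal{T}^\bot}\mathcal{P}_{\mathcal{O}^\bot}\boldsymbol{Z}_{j-1}$ with no analogue in the random-sampling proofs. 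This bias cannot be absorbed by the Frobenius decay alone, since even its $j=1$ contribution $\mathcal{P}_{\mathcal{T}^\bot}\mathcal{P}_{\mathcal{O}^\bot}\boldsymbol{U}\boldsymbol{V}^\ast$ is already of order $\left\lVert\boldsymbol{U}\boldsymbol{V}^\ast\right\rVert$; it must instead be controlled by the relative well-conditionedness premise, which quantifies $\left\lVert\mathcal{P}_{\mathcal{T}^\bot}\mathcal{P}_{\mathcal{O}^\bot}\mathcal{P}_{\mathcal{T}}\right\rVert$ (equivalently, the off-tangent operator norm of the exact $\mathcal{O}$-supported dual certificate $\mathcal{P}_{\mathcal{O}}(\mathcal{P}_{\mathcal{T}}\mathcal{P}_{\mathcal{O}}\mathcal{P}_{\mathcal{T}})^{-1}\boldsymbol{U}\boldsymbol{V}^\ast$ analyzed in \cite{liu2019matrix}). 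I expect the isomerism margin $\gamma_{\mathsf{O},\mathsf{O}^T}(\boldsymbol{L}_0)>3/4$ to furnish the constant $1/4$ here, so the crux is to show that golfing over the randomly thinned $\mathsf{N}$ does not inflate this deterministic operator-norm bound beyond what the noiseless MCDS theory already guarantees over $\mathsf{O}$.
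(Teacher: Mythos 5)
Your construction, recursion, and first two bounds track the paper's proof closely: the paper also golfs over $k=C_g\log(n)$ Bernoulli batches $\mathsf{M}_i\thicksim Ber(\eta)$ with $\eta=1-\rho^{1/k}$ whose union is $\mathsf{W}^\bot$, runs the same residual iteration $\boldsymbol{D}_i=\left(\mathcal{P}_{\mathcal{T}}-\eta^{-1}\mathcal{P}_\mathcal{T}\mathcal{P}_\mathcal{O}\mathcal{P}_{\mathcal{M}_i}\mathcal{P}_\mathcal{T}\right)\left[\boldsymbol{D}_{i-1}\right]$, gets Frobenius contraction at rate $\frac{1}{2}+\epsilon$ (the $\frac12$ being the deterministic bias $\left\lVert\mathcal{P}_\mathcal{T}\mathcal{P}_{\mathcal{O}^\bot}\mathcal{P}_\mathcal{T}\right\rVert\leq 2\left(1-\gamma_{\mathcal{O},\mathcal{O}^T}\left(\boldsymbol{L}_0\right)\right)$ from isomerism), gets $\ell_\infty$ contraction at rate $\gamma+\epsilon$ from RAIIP, and uses the strengthened incoherence exactly as you do to absorb the $\eta^{-1}=O(\log n)$ loss into $\lambda/4$. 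Your observation that the ordinary incoherence would force a $\log^3(n)$ condition is also consistent with the paper's arithmetic.

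The genuine gap is in your last step, and it is the one step you yourself flag as unresolved. You propose to control the deterministic bias $\sum_j\mathcal{P}_{\mathcal{T}^\bot}\mathcal{P}_{\mathcal{O}^\bot}\left[\boldsymbol{Z}_{j-1}\right]$ through relative well-conditionedness, i.e.\ through a bound on $\left\lVert\mathcal{P}_{\mathcal{T}^\bot}\mathcal{P}_{\mathcal{O}^\bot}\mathcal{P}_{\mathcal{T}}\right\rVert$. That tool cannot deliver the constant $1/4$: relative well-conditionedness only yields Frobenius-to-Frobenius contraction, so the best it gives per term is $\left\lVert\mathcal{P}_{\mathcal{T}^\bot}\mathcal{P}_{\mathcal{O}^\bot}\left[\boldsymbol{Z}_{j-1}\right]\right\rVert\leq\left\lVert\mathcal{P}_{\mathcal{O}^\bot}\left[\boldsymbol{Z}_{j-1}\right]\right\rVert_F\lesssim\theta_F^{\,j-1}\sqrt{r}$, and the geometric sum is then of order $\sqrt{r}$, not $o(1)$. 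The paper closes this with a much more elementary device, and notably one you already deploy for the random fluctuation term: bound the bias in spectral norm by its entrywise norm, $\left\lVert\mathcal{P}_{\mathcal{O}^\bot}\left[\boldsymbol{D}_{i-1}\right]\right\rVert\leq n\left\lVert\boldsymbol{D}_{i-1}\right\rVert_\infty$, then feed in the RAIIP-driven geometric decay $\left\lVert\boldsymbol{D}_{i-1}\right\rVert_\infty\leq\left(\gamma+\epsilon\right)^{i-1}\left\lVert\boldsymbol{U}\boldsymbol{V}^\ast\right\rVert_\infty$ and the strengthened incoherence $n\left\lVert\boldsymbol{U}\boldsymbol{V}^\ast\right\rVert_\infty\leq\sqrt{\nu r/\log(n)}$, so that the entire bias (and fluctuation) sum is at most of order $\frac{\sqrt{\nu r}}{\left(1-\left(\gamma+\epsilon\right)\right)\sqrt{\log(n)}}<\frac14$ for $n$ large. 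In other words, no new deterministic operator-norm estimate over $\mathsf{N}$ is needed at all; RAIIP plus the strengthened incoherence—the same ingredients you used for $\left\lVert\boldsymbol{\Lambda}_L\right\rVert_\infty$—also kill the bias, and your premise that the term ``must instead be controlled by the relative well-conditionedness premise'' is a misdirection (relative well-conditionedness enters this lemma only through the Frobenius contraction rate). As a side remark, your claim that the $j=1$ bias contribution is ``already of order $\left\lVert\boldsymbol{U}\boldsymbol{V}^\ast\right\rVert$'' is too pessimistic for the same reason: it is at most $n\left\lVert\boldsymbol{U}\boldsymbol{V}^\ast\right\rVert_\infty\leq\sqrt{\nu r/\log(n)}$.
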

\begin{proof}
	\textit{Deferred to Section \ref{sec:Mathematical Proofs}.}
\end{proof}
\begin{lemma}
	Conditioned on the settings of Thm \ref{Th: Main Theorem}, there exists $\boldsymbol{\Lambda}_S\in\mathcal{O}$ such that with very high probability,
	\begin{equation}
		\begin{aligned}
			&\mathcal{P}_{\mathcal{T}}\left[\boldsymbol{\Lambda}_S\right]=0,
			\,&\left\lVert\mathcal{P}_{\mathcal{T}^\bot}\left[\boldsymbol{\Lambda}_S\right]\right\lVert_F< \frac{1}{4},
			\\&\mathcal{P}_{\mathcal{V}}\left[\boldsymbol{\Lambda}_S\right]=\lambda\overline{\boldsymbol{\Sigma}}_0,\,&\left\lVert\mathcal{P}_{\mathcal{N}}\left[\boldsymbol{\Lambda}_S\right]\right\lVert_\infty<\frac{\lambda}{4}.
		\end{aligned}
		\nonumber
	\end{equation}
	\label{Le: Certificate2}
\end{lemma}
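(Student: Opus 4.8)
The plan is to build $\boldsymbol{\Lambda}_S$ by the least-squares (equivalently, Neumann-series) scheme of \cite{candes2011robust}, but with the effective corruption support taken to be $\mathsf{V}=\mathsf{O}\cap\mathsf{W}$ rather than the full support of $\boldsymbol{S}_0$. The first ingredient I would secure is the operator-norm bound $\left\lVert\mathcal{P}_{\mathcal{V}}\mathcal{P}_{\mathcal{T}}\mathcal{P}_{\mathcal{V}}\right\rVert<1$ with very high probability. Since $\mathcal{V}\subseteq\mathcal{W}$ with $\mathsf{W}\sim Ber(\rho)$, one has $\mathcal{P}_{\mathcal{W}}\mathcal{P}_{\mathcal{V}}=\mathcal{P}_{\mathcal{V}}$ and hence $\left\lVert\mathcal{P}_{\mathcal{V}}\mathcal{P}_{\mathcal{T}}\right\rVert\leq\left\lVert\mathcal{P}_{\mathcal{W}}\mathcal{P}_{\mathcal{T}}\right\rVert$; the standard Bernoulli concentration $\left\lVert\mathcal{P}_{\mathcal{T}}\mathcal{P}_{\mathcal{W}}\mathcal{P}_{\mathcal{T}}-\rho\,\mathcal{P}_{\mathcal{T}}\right\rVert\leq o(1)$ then forces $\left\lVert\mathcal{P}_{\mathcal{V}}\mathcal{P}_{\mathcal{T}}\right\rVert\leq\sqrt{\rho}+o(1)$, which is bounded away from $1$ as soon as $\rho<\rho_s$. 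This licences the inverse $\left(\mathcal{P}_{\mathcal{V}}\mathcal{P}_{\mathcal{T}^{\bot}}\mathcal{P}_{\mathcal{V}}\right)^{-1}=\sum_{k\geq0}\left(\mathcal{P}_{\mathcal{V}}\mathcal{P}_{\mathcal{T}}\mathcal{P}_{\mathcal{V}}\right)^{k}$ on $\mathcal{V}$ and provides the uniform control of its operator norm that all later estimates rest on.

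With the inverse in hand I would set $\boldsymbol{\Lambda}_S=\lambda\,\mathcal{P}_{\mathcal{O}\cap\mathcal{T}^{\bot}}\mathcal{P}_{\mathcal{V}}\left(\mathcal{P}_{\mathcal{V}}\mathcal{P}_{\mathcal{O}\cap\mathcal{T}^{\bot}}\mathcal{P}_{\mathcal{V}}\right)^{-1}\overline{\boldsymbol{\Sigma}}_0$, deliberately projecting onto $\mathcal{O}\cap\mathcal{T}^{\bot}$ instead of onto $\mathcal{T}^{\bot}$ alone as in the fully-observed case. By construction $\boldsymbol{\Lambda}_S\in\mathcal{O}$ and $\mathcal{P}_{\mathcal{T}}\left[\boldsymbol{\Lambda}_S\right]=0$, while $\mathcal{P}_{\mathcal{V}}\left[\boldsymbol{\Lambda}_S\right]=\lambda\overline{\boldsymbol{\Sigma}}_0$ follows by cancelling the inverse against $\mathcal{P}_{\mathcal{V}}\mathcal{P}_{\mathcal{O}\cap\mathcal{T}^{\bot}}\mathcal{P}_{\mathcal{V}}$ evaluated on $\overline{\boldsymbol{\Sigma}}_0\in\mathcal{V}$. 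The first and third prescribed identities of the lemma are thus enforced deterministically, leaving only the two norm inequalities to certify probabilistically.

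For the spectral part (the quantity that Lemma \ref{Le: Relaxed KKT} actually consumes through $\left\lVert\mathcal{P}_{\mathcal{T}^{\bot}}\left[\boldsymbol{\Lambda}\right]\right\rVert<\tfrac{1}{2}$) I would write $\left\lVert\mathcal{P}_{\mathcal{T}^{\bot}}\left[\boldsymbol{\Lambda}_S\right]\right\rVert=\left\lVert\boldsymbol{\Lambda}_S\right\rVert\leq\lambda\,\bigl\lVert(\mathcal{P}_{\mathcal{V}}\mathcal{P}_{\mathcal{O}\cap\mathcal{T}^{\bot}}\mathcal{P}_{\mathcal{V}})^{-1}\bigr\rVert\,\left\lVert\overline{\boldsymbol{\Sigma}}_0\right\rVert$ and bound $\left\lVert\overline{\boldsymbol{\Sigma}}_0\right\rVert$ by the operator-norm estimate for a zero-mean $\pm1$ matrix of Bernoulli density $\rho$ supported on $\mathsf{O}$, namely $\left\lVert\overline{\boldsymbol{\Sigma}}_0\right\rVert\lesssim\sqrt{\rho n}$ with very high probability; together with $\lambda=1/\sqrt{n\log(n)}$ and the bounded inverse this sits below $1/4$. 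For the entrywise bound I would fix $(i,j)\in\mathsf{N}$ and express $\left[\mathcal{P}_{\mathcal{N}}\boldsymbol{\Lambda}_S\right]_{ij}=-\lambda\left\langle\mathcal{P}_{\mathcal{T}}\boldsymbol{E}_{ij},\sum_{k\geq0}\left(\mathcal{P}_{\mathcal{V}}\mathcal{P}_{\mathcal{T}}\mathcal{P}_{\mathcal{V}}\right)^{k}\overline{\boldsymbol{\Sigma}}_0\right\rangle$ as a Rademacher linear functional of the signs, whose variance proxy is $\lambda^{2}$ times a quantity dominated, via the bounded Neumann tail and incoherence, by $\left\lVert\mathcal{P}_{\mathcal{T}}\boldsymbol{E}_{ij}\right\rVert_{F}^{2}\leq\frac{2\nu r}{n}$. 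A Hoeffding-type inequality and a union bound over the at most $n^{2}$ entries then give $\left\lVert\mathcal{P}_{\mathcal{N}}\left[\boldsymbol{\Lambda}_S\right]\right\rVert_{\infty}<\frac{\lambda}{4}$ as soon as $\nu r\log(n)/n$ is small, which the hypotheses of Thm \ref{Th: Main Theorem} guarantee.

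The step I expect to be the genuine obstacle is making the projector $\mathcal{P}_{\mathcal{O}\cap\mathcal{T}^{\bot}}$ tractable. In the random-sampling analysis one projects onto $\mathcal{T}^{\bot}$ and no observation constraint is present, so the certificate may leave the sampled set freely; here $\boldsymbol{\Lambda}_S$ must additionally live in $\mathcal{O}$, and the operator $\mathcal{P}_{\mathcal{V}}\mathcal{P}_{\mathcal{O}\cap\mathcal{T}^{\bot}}\mathcal{P}_{\mathcal{V}}$ entangles the deterministic set $\mathsf{O}$ with $\mathcal{T}$. Its invertibility and the norm of its inverse cannot be read off from Bernoulli concentration alone; this is precisely where RAIIP (bounding $\mathcal{P}_{\mathcal{T}}\mathcal{P}_{\mathcal{O}^{\bot}}$), \emph{isomerism} and \emph{relative well-conditionedness} must enter, ensuring that $\mathcal{O}^{\bot}$ and $\mathcal{T}$ are transverse enough for $\mathcal{O}\cap\mathcal{T}^{\bot}$ to recover all of $\mathcal{V}$ after projection. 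Finally, every concentration statement above should be carried out conditionally on the deterministic $\mathsf{O}$ and on the draw of $\mathsf{W}$, so that $\mathsf{V}$ and $\mathsf{N}$ are treated as Bernoulli subsets of $\mathsf{O}$ and the only randomness driving the estimates is that of $\mathsf{W}$ and the Rademacher signs.
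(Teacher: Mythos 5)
Your construction is the paper's construction in disguise: since $(\mathcal{T}+\mathcal{O}^\bot)^\bot=\mathcal{O}\cap\mathcal{T}^\bot$, your $\boldsymbol{\Lambda}_S$ coincides with the paper's $\lambda\left(\mathcal{I}-\mathcal{P}_{(\mathcal{T}+\mathcal{O}^\bot)}\right)\sum_{i\geq0}\left(\mathcal{P}_{\mathcal{V}}\mathcal{P}_{(\mathcal{T}+\mathcal{O}^\bot)}\mathcal{P}_{\mathcal{V}}\right)^i\left[\overline{\boldsymbol{\Sigma}}_0\right]$, and your plan for $\left\lVert\mathcal{P}_{\mathcal{N}}\left[\boldsymbol{\Lambda}_S\right]\right\rVert_\infty$ (Hoeffding on the Rademacher signs plus a union bound over the $n^2$ entries) is exactly the paper's. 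However, two steps are left open or are wrong, and they are precisely where the work lies. First, the invertibility of $\mathcal{P}_{\mathcal{V}}\mathcal{P}_{\mathcal{O}\cap\mathcal{T}^\bot}\mathcal{P}_{\mathcal{V}}$ on $\mathcal{V}$: you name it ``the genuine obstacle'' and defer it to ``where isomerism and relative well-conditionedness must enter,'' but that deferral is the crux of the lemma, not a remark. Your Bernoulli estimate $\left\lVert\mathcal{P}_{\mathcal{V}}\mathcal{P}_{\mathcal{T}}\mathcal{P}_{\mathcal{V}}\right\rVert<1$ cannot substitute for it, because $\mathcal{O}\cap\mathcal{T}^\bot$ is strictly smaller than $\mathcal{T}^\bot$: writing $\mathcal{P}_{\mathcal{V}}\mathcal{P}_{\mathcal{O}\cap\mathcal{T}^\bot}\mathcal{P}_{\mathcal{V}}=\mathcal{P}_{\mathcal{V}}-\mathcal{P}_{\mathcal{V}}\mathcal{P}_{(\mathcal{T}+\mathcal{O}^\bot)}\mathcal{P}_{\mathcal{V}}$ and noting $\mathcal{P}_{(\mathcal{T}+\mathcal{O}^\bot)}\succeq\mathcal{P}_{\mathcal{T}}$, your bound controls a quantity that sits \emph{below} the one you need to push under $1$, so it gives no upper bound at all. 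The paper closes this gap with the identity $\mathcal{P}_{\mathcal{V}}\mathcal{P}_{(\mathcal{T}+\mathcal{O}^\bot)}\mathcal{P}_{\mathcal{V}}=\mathcal{P}_{\mathcal{V}}\mathcal{P}_{\mathcal{T}}\left(\mathcal{P}_{\mathcal{T}}\mathcal{P}_{\mathcal{O}}\mathcal{P}_{\mathcal{T}}\right)^{-1}\mathcal{P}_{\mathcal{T}}\mathcal{P}_{\mathcal{V}}$ (valid since $\mathcal{V}\subseteq\mathcal{O}$), together with $\left\lVert\left(\mathcal{P}_{\mathcal{T}}\mathcal{P}_{\mathcal{O}}\mathcal{P}_{\mathcal{T}}\right)^{-1}\right\rVert\leq2$, which is exactly what isomerism with $\gamma_{\mathsf{O},\mathsf{O}^T}\left(\boldsymbol{L}_0\right)>\frac{3}{4}$ and relative well-conditionedness buy (Lemmas 5.6 and 5.11 of \cite{liu2019matrix}); combined with Cor \ref{Cor:Aux1} this yields $\left\lVert\mathcal{P}_{\mathcal{V}}\mathcal{P}_{(\mathcal{T}+\mathcal{O}^\bot)}\mathcal{P}_{\mathcal{V}}\right\rVert\leq2\rho_s<1$.

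Second, your one-line spectral bound $\left\lVert\boldsymbol{\Lambda}_S\right\rVert\leq\lambda\left\lVert\left(\mathcal{P}_{\mathcal{V}}\mathcal{P}_{\mathcal{O}\cap\mathcal{T}^\bot}\mathcal{P}_{\mathcal{V}}\right)^{-1}\right\rVert\left\lVert\overline{\boldsymbol{\Sigma}}_0\right\rVert$ is invalid: the operator norm of the inverse is induced by the Frobenius norm, so it controls only the Frobenius norm of its output, never the spectral norm. Concretely, a permutation of matrix entries has Frobenius-induced norm $1$ yet maps a randomly arranged $\pm1$ mask of density $\rho$ (spectral norm $O(\sqrt{\rho n})$) to the same entries packed into a square block (spectral norm $\Theta(\sqrt{\rho}\,n)$), a $\sqrt{n}$-fold amplification. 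If you instead read your inequality consistently in Frobenius norm, then $\left\lVert\overline{\boldsymbol{\Sigma}}_0\right\rVert_F\approx n\sqrt{\rho}$ and the bound becomes of order $\sqrt{\rho n/\log(n)}$, which diverges for constant $\rho$. This failure is exactly why the paper (following Cand\`es et al.) splits the Neumann series: the $i=0$ term, where no entangling operator acts, is bounded by $\lambda\left\lVert\overline{\boldsymbol{\Sigma}}_0\right\rVert\lesssim\sqrt{\rho/\log(n)}$ via the random-matrix spectral estimate, while the $i\geq1$ tail is bounded by an $\epsilon$-net over $\mathbb{S}^{n-1}\times\mathbb{S}^{n-1}$ plus Hoeffding conditioned on $\mathsf{V}$, exploiting $\left\lVert\mathcal{F}\left[\boldsymbol{x}\boldsymbol{y}^*\right]\right\rVert_F\leq\lambda\frac{2\rho_s}{1-2\rho_s}$. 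Without these two repairs the proposal does not yield the lemma.
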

\begin{proof}
	\textit{Deferred to Section \ref{sec:Mathematical Proofs}.}
\end{proof}
\noindent
With Lemma \ref{Le: Certificate1} and \ref{Le: Certificate2}, if we denote $\boldsymbol{\Lambda}=\boldsymbol{\Lambda}_L+\boldsymbol{\Lambda}_S$, it follows that
\begin{equation}
	\begin{aligned}
		\left\lVert\mathcal{P}_{\mathcal{T}}\left[\boldsymbol{\Lambda}\right]-\boldsymbol{U}\boldsymbol{V}^\ast\right\lVert_F&=\left\lVert\mathcal{P}_{\mathcal{T}}\left[\boldsymbol{\Lambda}_L\right]-\boldsymbol{U}\boldsymbol{V}^\ast\right\lVert_F< \frac{1}{n^2},\\\left\lVert\mathcal{P}_{\mathcal{T}^\bot}\left[\boldsymbol{\Lambda}\right]\right\lVert_F&=\left\lVert\mathcal{P}_{\mathcal{T}^\bot}\left[\boldsymbol{\Lambda}_L+\boldsymbol{\Lambda}_S\right]\right\lVert_F\\&\leq\left\lVert\mathcal{P}_{\mathcal{T}^\bot}\left[\boldsymbol{\Lambda}_L\right]\right\lVert_F+\left\lVert\mathcal{P}_{\mathcal{T}^\bot}\left[\boldsymbol{\Lambda}_S\right]\right\lVert_F\\&<\frac{1}{2},\\\mathcal{P}_{\mathcal{V}}\left[\boldsymbol{\Lambda}\right]&=\mathcal{P}_{\mathcal{V}}\left[\boldsymbol{\Lambda}_S\right]=\lambda\overline{\boldsymbol{\Sigma}}_0,\\\left\lVert\mathcal{P}_{\mathcal{N}}\left[\boldsymbol{\Lambda}\right]\right\lVert_\infty&=\left\lVert\mathcal{P}_{\mathcal{N}}\left[\boldsymbol{\Lambda}_L+\boldsymbol{\Lambda}_S\right]\right\lVert_\infty\\&\leq\left\lVert\mathcal{P}_{\mathcal{N}}\left[\boldsymbol{\Lambda}_L\right]\right\lVert_\infty+\left\lVert\mathcal{P}_{\mathcal{N}}\left[\boldsymbol{\Lambda}_S\right]\right\lVert_\infty\\&<\frac{\lambda}{2}.
	\end{aligned}
	\nonumber
\end{equation}
Thus \eqref{eq:Relaxed KKT} is satisfied and Thm \ref{Th: Main Theorem} is proven.
\section{Mathematical Proofs}
\label{sec:Mathematical Proofs}
\subsection{Preparation Theories}
To fulfill the proof of those main theories in Section \ref{sec:Identifiability}, we need first to prepare some basic conclusions, which are delivered by this section.
\begin{proposition}
	\textit{Suppose $\mathsf{M}\thicksim Ber(\pi)$ is a Bernoulli random subsets and $\mathcal{M}$ the corresponding linear space accommodating those matrices supported on $\mathsf{M}$. Let $\mathcal{O}$ and $\boldsymbol{L}_0\in\mathbb{R}^{n\times n}$ be configured as Thm \ref{Th: Main Theorem}. Fix any constant $\epsilon>0$. Then there exists a numerical constant $C_a$ such that if $\pi>C_a\frac{\nu r\log(n)}{\epsilon^2n},$ with high probability,
		$$\left\lVert\mathcal{P}_{\mathcal{T}}-\pi^{-1}\mathcal{P}_\mathcal{T}\mathcal{P}_\mathcal{O}\mathcal{P}_\mathcal{M}\mathcal{P}_\mathcal{T}\right\lVert\leq \frac{1}{2}+\epsilon.$$
	}
	\label{Prop:Aux1}
\end{proposition}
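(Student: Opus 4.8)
The plan is to split the target operator into a deterministic bias and a mean-zero random fluctuation, to control the former by the deterministic-sampling hypotheses and the latter by a matrix Bernstein inequality. First I would exploit that $\mathcal{P}_{\mathcal{O}}$ and $\mathcal{P}_{\mathcal{M}}$ are commuting coordinate projections: writing $\mathcal{P}_{\mathcal{M}}=\sum_{(i,j)}\delta_{ij}\mathcal{P}_{\mathcal{E}_{ij}}$ with $\delta_{ij}\thicksim Ber(\pi)$ independent and $\mathcal{P}_{\mathcal{E}_{ij}}[\boldsymbol{D}]=D_{ij}\boldsymbol{E}_{ij}$ gives $\mathcal{P}_{\mathcal{T}}\mathcal{P}_{\mathcal{O}}\mathcal{P}_{\mathcal{M}}\mathcal{P}_{\mathcal{T}}=\sum_{(i,j)\in\mathsf{O}}\delta_{ij}\,\mathcal{P}_{\mathcal{T}}\mathcal{P}_{\mathcal{E}_{ij}}\mathcal{P}_{\mathcal{T}}$, since $\mathcal{P}_{\mathcal{O}}\mathcal{P}_{\mathcal{E}_{ij}}$ vanishes off $\mathsf{O}$. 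Taking expectations and using $\sum_{(i,j)\in\mathsf{O}}\mathcal{P}_{\mathcal{E}_{ij}}=\mathcal{P}_{\mathcal{O}}$ yields $\mathcal{E}[\pi^{-1}\mathcal{P}_{\mathcal{T}}\mathcal{P}_{\mathcal{O}}\mathcal{P}_{\mathcal{M}}\mathcal{P}_{\mathcal{T}}]=\mathcal{P}_{\mathcal{T}}\mathcal{P}_{\mathcal{O}}\mathcal{P}_{\mathcal{T}}$. The triangle inequality then reduces the claim to
$$\|\mathcal{P}_{\mathcal{T}}-\pi^{-1}\mathcal{P}_{\mathcal{T}}\mathcal{P}_{\mathcal{O}}\mathcal{P}_{\mathcal{M}}\mathcal{P}_{\mathcal{T}}\|\le\|\mathcal{P}_{\mathcal{T}}-\mathcal{P}_{\mathcal{T}}\mathcal{P}_{\mathcal{O}}\mathcal{P}_{\mathcal{T}}\|+\|\mathcal{P}_{\mathcal{T}}\mathcal{P}_{\mathcal{O}}\mathcal{P}_{\mathcal{T}}-\pi^{-1}\mathcal{P}_{\mathcal{T}}\mathcal{P}_{\mathcal{O}}\mathcal{P}_{\mathcal{M}}\mathcal{P}_{\mathcal{T}}\|,$$
where it suffices to bound the first (deterministic) term by $\frac{1}{2}$ and the second (random) term by $\epsilon$.

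For the bias I would use that $\mathcal{P}_{\mathcal{O}^\bot}$ is a self-adjoint projection, so $\mathcal{P}_{\mathcal{T}}-\mathcal{P}_{\mathcal{T}}\mathcal{P}_{\mathcal{O}}\mathcal{P}_{\mathcal{T}}=\mathcal{P}_{\mathcal{T}}\mathcal{P}_{\mathcal{O}^\bot}\mathcal{P}_{\mathcal{T}}=(\mathcal{P}_{\mathcal{O}^\bot}\mathcal{P}_{\mathcal{T}})^*(\mathcal{P}_{\mathcal{O}^\bot}\mathcal{P}_{\mathcal{T}})$, whence $\|\mathcal{P}_{\mathcal{T}}\mathcal{P}_{\mathcal{O}^\bot}\mathcal{P}_{\mathcal{T}}\|=\|\mathcal{P}_{\mathcal{O}^\bot}\mathcal{P}_{\mathcal{T}}\|^2=\sup_{\boldsymbol{D}\in\mathcal{T},\,\|\boldsymbol{D}\|_F=1}\|\mathcal{P}_{\mathcal{O}^\bot}[\boldsymbol{D}]\|_F^2$. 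Bounding this by $\frac{1}{2}$ is equivalent to the energy-retention estimate $\|\mathcal{P}_{\mathcal{O}}[\boldsymbol{D}]\|_F^2\ge\frac{1}{2}\|\boldsymbol{D}\|_F^2$ for every $\boldsymbol{D}\in\mathcal{T}$, i.e.\ that $\mathcal{P}_{\mathcal{O}}$ is bounded below on the tangent space and annihilates no tangent direction. This is exactly the point at which the deterministic-sampling premises of Thm \ref{Th: Main Theorem}---the RAIIP bound $\|\mathcal{P}_{\mathcal{T}}\mathcal{P}_{\mathcal{O}^\bot}\|_{\mathcal{T},\infty}<1$ together with the isomerism and relative well-conditionedness of $\boldsymbol{L}_0$ on $\mathsf{O}/\mathsf{O}^T$---must be invoked, since these are the only hypotheses that encode how much of $\mathcal{T}$ survives restriction to the sampled support. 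I expect this bias estimate, not the concentration step, to be the crux: it is the sole place where the deterministic structure of $\mathsf{O}$ enters, and converting the $l_\infty$-type RAIIP bound into the Frobenius/operator-norm lower bound on $\mathcal{P}_{\mathcal{O}}|_{\mathcal{T}}$ is the delicate part.

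For the fluctuation I would apply the operator Bernstein inequality to the independent, self-adjoint, mean-zero summands $\mathcal{S}_{ij}=(\pi^{-1}\delta_{ij}-1)\,\mathcal{P}_{\mathcal{T}}\mathcal{P}_{\mathcal{E}_{ij}}\mathcal{P}_{\mathcal{T}}$, $(i,j)\in\mathsf{O}$. Incoherence gives the uniform control $\|\mathcal{P}_{\mathcal{T}}[\boldsymbol{E}_{ij}]\|_F^2=\|\boldsymbol{U}^*\boldsymbol{e}_i\|_2^2+\|\boldsymbol{V}^*\boldsymbol{e}_j\|_2^2-\|\boldsymbol{U}^*\boldsymbol{e}_i\|_2^2\|\boldsymbol{V}^*\boldsymbol{e}_j\|_2^2\le\frac{2\nu r}{n}$, so $\|\mathcal{P}_{\mathcal{T}}\mathcal{P}_{\mathcal{E}_{ij}}\mathcal{P}_{\mathcal{T}}\|\le\frac{2\nu r}{n}$ and $\|\mathcal{S}_{ij}\|\le\pi^{-1}\frac{2\nu r}{n}$. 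Using $(\mathcal{P}_{\mathcal{T}}\mathcal{P}_{\mathcal{E}_{ij}}\mathcal{P}_{\mathcal{T}})^2=\|\mathcal{P}_{\mathcal{T}}[\boldsymbol{E}_{ij}]\|_F^2\,\mathcal{P}_{\mathcal{T}}\mathcal{P}_{\mathcal{E}_{ij}}\mathcal{P}_{\mathcal{T}}$ and $\mathcal{E}[(\pi^{-1}\delta_{ij}-1)^2]\le\pi^{-1}$, the variance proxy satisfies $\|\sum_{(i,j)\in\mathsf{O}}\mathcal{E}[\mathcal{S}_{ij}^2]\|\le\pi^{-1}\frac{2\nu r}{n}\|\mathcal{P}_{\mathcal{T}}\mathcal{P}_{\mathcal{O}}\mathcal{P}_{\mathcal{T}}\|\le\pi^{-1}\frac{2\nu r}{n}$. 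Feeding $\sigma^2=R=\pi^{-1}\frac{2\nu r}{n}$ into Bernstein, with the ambient dimension factor $n^2$, the failure probability is at most $n^{-\beta}$ once $\pi\ge C_a\frac{\nu r\log(n)}{\epsilon^2 n}$, which is precisely the stated threshold; this bounds the fluctuation by $\epsilon$ with high probability.

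Combining, on the intersection of the (deterministic) bias event and the high-probability Bernstein event the two contributions are at most $\frac{1}{2}$ and $\epsilon$, giving $\|\mathcal{P}_{\mathcal{T}}-\pi^{-1}\mathcal{P}_{\mathcal{T}}\mathcal{P}_{\mathcal{O}}\mathcal{P}_{\mathcal{M}}\mathcal{P}_{\mathcal{T}}\|\le\frac{1}{2}+\epsilon$, as claimed. The commuting-projection algebra, the incoherence estimate, and matrix Bernstein are the routine ingredients; the one genuinely new element is the deterministic bias bound $\|\mathcal{P}_{\mathcal{T}}\mathcal{P}_{\mathcal{O}^\bot}\mathcal{P}_{\mathcal{T}}\|\le\frac{1}{2}$, which I regard as the main obstacle and the reason the final constant is $\frac{1}{2}+\epsilon$ rather than the $\epsilon$ of the random-sampling case.
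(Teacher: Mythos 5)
Your architecture is the paper's architecture: split $\mathcal{P}_{\mathcal{T}}-\pi^{-1}\mathcal{P}_{\mathcal{T}}\mathcal{P}_{\mathcal{O}}\mathcal{P}_{\mathcal{M}}\mathcal{P}_{\mathcal{T}}$ into the deterministic bias $\mathcal{P}_{\mathcal{T}}\mathcal{P}_{\mathcal{O}^\bot}\mathcal{P}_{\mathcal{T}}$ (target $\tfrac{1}{2}$) plus a mean-zero fluctuation (target $\epsilon$), and control the fluctuation by matrix Bernstein. Your fluctuation step matches the paper's essentially line for line: the same summands $\bigl(1-\pi^{-1}\delta_{ij}\bigr)\mathcal{P}_{\mathcal{T}}\boldsymbol{E}_{ij}\left\langle\boldsymbol{E}_{ij},\cdot\right\rangle\mathcal{P}_{\mathcal{T}}$ over $(i,j)\in\mathsf{O}$, the same incoherence bound $\left\lVert\mathcal{P}_{\mathcal{T}}\left[\boldsymbol{E}_{ij}\right]\right\rVert_F^2\leq\frac{2\nu r}{n}$, the same variance proxy, and the same threshold on $\pi$. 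That part is fine.

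The genuine gap is the bias bound $\left\lVert\mathcal{P}_{\mathcal{T}}\mathcal{P}_{\mathcal{O}^\bot}\mathcal{P}_{\mathcal{T}}\right\rVert\leq\tfrac{1}{2}$, which you explicitly leave unproven (calling it ``the main obstacle''), and the route you sketch for it is a dead end. This bound has nothing to do with RAIIP: the restricted $\infty$-norm condition $\left\lVert\mathcal{P}_{\mathcal{T}}\mathcal{P}_{\mathcal{O}^\bot}\right\rVert_{\mathcal{T},\infty}<1$ controls an $l_\infty\!\to\!l_\infty$ ratio on $\mathcal{T}$, and converting it to a spectral bound costs a dimension factor (for $\boldsymbol{D}\in\mathcal{T}$ with $\left\lVert\boldsymbol{D}\right\rVert_F=1$ one only gets $\left\lVert\mathcal{P}_{\mathcal{T}}\mathcal{P}_{\mathcal{O}^\bot}\left[\boldsymbol{D}\right]\right\rVert_F\leq n\left\lVert\mathcal{P}_{\mathcal{T}}\mathcal{P}_{\mathcal{O}^\bot}\left[\boldsymbol{D}\right]\right\rVert_\infty$), so no constant of order $\tfrac{1}{2}$ can come out this way. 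What actually closes the step in the paper is the \emph{relative well-conditionedness} premise of Section \ref{sec:formulation}: the sampling pattern is assumed to satisfy $\gamma_{\mathsf{O},\mathsf{O}^T}\left(\boldsymbol{L}_0\right)>\tfrac{3}{4}$, and by the corresponding result in \cite{liu2019matrix} this yields, in one line, $\left\lVert\mathcal{P}_{\mathcal{T}}\mathcal{P}_{\mathcal{O}^\bot}\mathcal{P}_{\mathcal{T}}\right\rVert\leq 2\left(1-\gamma_{\mathsf{O},\mathsf{O}^T}\left(\boldsymbol{L}_0\right)\right)<\tfrac{1}{2}$. RAIIP is reserved for the $l_\infty$ analogue of this proposition (Prop \ref{Prop:Aux2}), where the restricted $\infty$-norm is the natural quantity. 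Since you neither identify this mechanism nor supply any substitute for it, the claim that the hypotheses ``must be invoked'' is an observation, not a proof, and your argument does not close.
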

\begin{proof}
	\textit{To begin, we consider to apply Bernstein inequality to bound the operator norm of 
		\begin{equation}
			\begin{aligned}
				\mathcal{P}_\mathcal{T}\mathcal{P}_\mathcal{O}\mathcal{P}_\mathcal{T}&-\pi^{-1}\mathcal{P}_\mathcal{T}\mathcal{P}_\mathcal{O}\mathcal{P}_\mathcal{M}\mathcal{P}_\mathcal{T}\\&
				\begin{matrix}
					=\sum_{(i,j)\in\mathsf{O}}&\underbrace{\mathcal{P}_\mathcal{T}\left(1-\pi^{-1}\delta_{ij}\right)\boldsymbol{E}_{ij}\left\langle\boldsymbol{E}_{ij},\cdot\right\rangle\mathcal{P}_\mathcal{T}}\\&\mathcal{H}_{ij}
				\end{matrix}
			\end{aligned}
			\nonumber
		\end{equation}
		where $\delta_{ij}$ are independent Bernoulli variables with the same distribution $Pr(\delta_{ij}=1)=\pi,\,Pr(\delta_{ij}=0)=1-\pi.$
		Since $\mathcal{E}\left[\sum_{(i,j)\in\mathsf{O}}\mathcal{H}_{ij}\right]=0$, we further seek to bound $\max_{ij}\left\lVert\mathcal{H}_{ij}\right\lVert$ and the 'variance' $\sum_{(i,j)\in\mathsf{O}}\mathcal{E}\left[\mathcal{H}_{ij}^\ast\mathcal{H}_{ij}\right].$
		\begin{itemize}
			\item
			A bound on the summands with probability 1:
			\begin{align}
				\left\lVert\mathcal{H}_{ij}\right\lVert&=\left(1-\pi^{-1}\delta_{ij}\right)\left\lVert\mathcal{P}_\mathcal{T}\boldsymbol{E}_{ij}\left\langle\boldsymbol{E}_{ij},\cdot\right\rangle\mathcal{P}_\mathcal{T}\right\lVert\notag\\&\leq \pi^{-1}\left\lVert\mathcal{P}_\mathcal{T}\boldsymbol{E}_{ij}\left\langle\boldsymbol{E}_{ij},\cdot\right\rangle\mathcal{P}_\mathcal{T}\right\lVert\notag\\&=\pi^{-1}\left\lVert\mathcal{P}_\mathcal{T}\left[\boldsymbol{E}_{ij}\right]\left\langle\mathcal{P}_\mathcal{T}\left[\boldsymbol{E}_{ij}\right],\cdot\right\rangle\right\lVert\notag\\&=\pi^{-1}\left\lVert\mathcal{P}_\mathcal{T}\left[\boldsymbol{E}_{ij}\right]\right\lVert_F^2\notag\\&\leq\frac{2\nu r}{n\pi}\notag\\&\leq \frac{2\epsilon^2}{C_a\log(n)}.
			\end{align}
			\item 
			A bound on the 'variance':
			\begin{align}
				&\quad\left\lVert\sum_{(i,j)\in\mathsf{O}}\mathcal{E}\left[\mathcal{H}_{ij}^\ast\mathcal{H}_{ij}\right]\right\lVert
				=\left\lVert\sum_{(i,j)\in\mathsf{O}}\mathcal{E}\left[\mathcal{H}_{ij}\mathcal{H}_{ij}^\ast\right]\right\lVert\notag\\&=\Bigg\lVert\sum_{(i,j)\in\mathsf{O}}\mathcal{E}\bigg[\left(1-\pi^{-1}\delta_{ij}\right)^2\mathcal{P}_\mathcal{T}\boldsymbol{E}_{ij}\left\lVert\mathcal{P}_\mathcal{T}\left[\boldsymbol{E}_{ij}\right]\right\lVert_F^2\notag\\&\left\langle\boldsymbol{E}_{ij},\cdot\right\rangle\mathcal{P}_\mathcal{T}\bigg]\Bigg\lVert\notag\\&=\Bigg\lVert\sum_{(i,j)\in\mathsf{O}}\mathcal{E}\bigg[\left(1-\pi^{-1}\delta_{ij}\right)^2\bigg]\mathcal{P}_\mathcal{T}\boldsymbol{E}_{ij}\left\lVert\mathcal{P}_\mathcal{T}\left[\boldsymbol{E}_{ij}\right]\right\lVert_F^2\notag\\&\left\langle\boldsymbol{E}_{ij},\cdot\right\rangle\mathcal{P}_\mathcal{T}\Bigg\lVert\notag\\&\leq\pi^{-1}\left\lVert\sum_{(i,j)\in\mathsf{O}}\mathcal{P}_\mathcal{T}\boldsymbol{E}_{ij}\left\lVert\mathcal{P}_\mathcal{T}\left[\boldsymbol{E}_{ij}\right]\right\lVert_F^2\left\langle\boldsymbol{E}_{ij},\cdot\right\rangle\mathcal{P}_\mathcal{T}\right\lVert\notag\\&\leq\frac{2\nu r}{n\pi}\left\lVert\sum_{(i,j)\in\mathsf{O}}\mathcal{P}_\mathcal{T}\boldsymbol{E}_{ij}\left\langle\boldsymbol{E}_{ij},\cdot\right\rangle\mathcal{P}_\mathcal{T}\right\lVert\notag\\&\leq\frac{2\epsilon^2}{C_a\log(n)}\left\lVert\mathcal{P}_\mathcal{T}\mathcal{P}_\mathcal{O}\mathcal{P}_\mathcal{T}\right\lVert\notag\\&\leq\frac{2\epsilon^2}{C_a\log(n)}.\notag
			\end{align}
		\end{itemize}
		Thereupon, we're able to invoke Bernstein inequality to obtain that
		\begin{equation}
			Pr\left(\left\lVert\sum_{(i,j)\in\mathsf{O}}\mathcal{H}_{ij}\right\lVert>\epsilon\right)\leq2n^{1-\frac{3C_a}{12+4\epsilon}}.
			\nonumber
		\end{equation}
		Finally, we complete the proof by indicating that with probability at least $1-2n^{1-\frac{3C_a}{12+4\epsilon}}$
		\begin{align}
			&\left\lVert\mathcal{P}_{\mathcal{T}}-\pi^{-1}\mathcal{P}_\mathcal{T}\mathcal{P}_\mathcal{O}\mathcal{P}_\mathcal{M}\mathcal{P}_\mathcal{T}\right\lVert
			\notag\\&=\left\lVert\sum_{(i,j)\in\mathsf{O}}\mathcal{H}_{ij}+\mathcal{P}_\mathcal{T}\mathcal{P}_\mathcal{O}^\bot\mathcal{P}_\mathcal{T}\right\lVert\notag\\&\leq\left\lVert\sum_{(i,j)\in\mathsf{O}}\mathcal{H}_{ij}\right\lVert+\left\lVert\mathcal{P}_\mathcal{T}\mathcal{P}_\mathcal{O}^\bot\mathcal{P}_\mathcal{T}\right\lVert\notag\\&\leq \epsilon +2\left(1-\gamma_{\mathcal{O},\mathcal{O}^T}\left(\boldsymbol{L}_0\right)\right)\notag\\&\leq \epsilon+\frac{1}{2}.\notag
		\end{align}
	}
\end{proof}
\begin{corollary}
	\textit{Conditioned on the settings of Thm \ref{Th: Main Theorem}, with very high probability, $\left\lVert\mathcal{P}_{\mathcal{V}}\mathcal{P}_{\mathcal{T}}\right\lVert\leq \sqrt{\rho+\epsilon'}$ where $\epsilon'$ is an arbitrarily small constant.}
	\label{Cor:Aux1}
\end{corollary}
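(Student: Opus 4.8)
The plan is to reduce the asymmetric operator-norm bound to a symmetric one and then mimic the Bernstein argument behind Proposition \ref{Prop:Aux1}, but with a sharper treatment of the mean term. First I would invoke the identity
$$\left\lVert\mathcal{P}_{\mathcal{V}}\mathcal{P}_{\mathcal{T}}\right\rVert^2=\left\lVert\mathcal{P}_{\mathcal{T}}\mathcal{P}_{\mathcal{V}}\mathcal{P}_{\mathcal{T}}\right\rVert,$$
which holds because $\mathcal{P}_{\mathcal{V}}$ and $\mathcal{P}_{\mathcal{T}}$ are self-adjoint idempotents. It therefore suffices to prove $\left\lVert\mathcal{P}_{\mathcal{T}}\mathcal{P}_{\mathcal{V}}\mathcal{P}_{\mathcal{T}}\right\rVert\leq\rho+\epsilon'$ with very high probability and then take square roots.

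Next I would use $\mathsf{V}=\mathsf{O}\cap\mathsf{W}$ to write the coordinate projector as a sum over the deterministic set $\mathsf{O}$ modulated by the random corruption indicators,
$$\mathcal{P}_{\mathcal{T}}\mathcal{P}_{\mathcal{V}}\mathcal{P}_{\mathcal{T}}=\sum_{(i,j)\in\mathsf{O}}\delta_{ij}\,\mathcal{P}_{\mathcal{T}}\boldsymbol{E}_{ij}\langle\boldsymbol{E}_{ij},\cdot\rangle\mathcal{P}_{\mathcal{T}},$$
where $\delta_{ij}$ are the i.i.d.\ $Ber(\rho)$ indicators of $(i,j)\in\mathsf{W}$, independent of the deterministic $\mathsf{O}$. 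Isolating the mean gives
$$\mathcal{P}_{\mathcal{T}}\mathcal{P}_{\mathcal{V}}\mathcal{P}_{\mathcal{T}}-\rho\,\mathcal{P}_{\mathcal{T}}\mathcal{P}_{\mathcal{O}}\mathcal{P}_{\mathcal{T}}=\sum_{(i,j)\in\mathsf{O}}\mathcal{H}_{ij},\quad \mathcal{H}_{ij}=(\delta_{ij}-\rho)\mathcal{P}_{\mathcal{T}}\boldsymbol{E}_{ij}\langle\boldsymbol{E}_{ij},\cdot\rangle\mathcal{P}_{\mathcal{T}},$$
a sum of independent, mean-zero, self-adjoint operators. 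The crucial gain over a direct appeal to Proposition \ref{Prop:Aux1} is that the mean is controlled without any spurious slack, namely $\left\lVert\rho\mathcal{P}_{\mathcal{T}}\mathcal{P}_{\mathcal{O}}\mathcal{P}_{\mathcal{T}}\right\rVert\leq\rho\left\lVert\mathcal{P}_{\mathcal{T}}\mathcal{P}_{\mathcal{O}}\mathcal{P}_{\mathcal{T}}\right\rVert\leq\rho$.

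I would then rerun the two Bernstein estimates exactly as in Proposition \ref{Prop:Aux1}. For the summands, incoherence yields $\left\lVert\mathcal{H}_{ij}\right\rVert\leq\left\lVert\mathcal{P}_{\mathcal{T}}[\boldsymbol{E}_{ij}]\right\rVert_F^2\leq\frac{2\nu r}{n}$; for the variance, using $\mathcal{E}[(\delta_{ij}-\rho)^2]=\rho(1-\rho)\leq\rho$ together with the rank-one identity $\mathcal{H}_{ij}^*\mathcal{H}_{ij}=(\delta_{ij}-\rho)^2\left\lVert\mathcal{P}_{\mathcal{T}}[\boldsymbol{E}_{ij}]\right\rVert_F^2\,\mathcal{P}_{\mathcal{T}}\boldsymbol{E}_{ij}\langle\boldsymbol{E}_{ij},\cdot\rangle\mathcal{P}_{\mathcal{T}}$, one obtains $\left\lVert\sum\mathcal{E}[\mathcal{H}_{ij}^*\mathcal{H}_{ij}]\right\rVert\leq\rho\frac{2\nu r}{n}\left\lVert\mathcal{P}_{\mathcal{T}}\mathcal{P}_{\mathcal{O}}\mathcal{P}_{\mathcal{T}}\right\rVert\leq\rho\frac{2\nu r}{n}$. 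Feeding these into the operator Bernstein inequality at deviation level $\epsilon'$ produces a tail of order $n\exp(-c\,\epsilon'^2 n/(\nu r(\rho+\epsilon')))$, which under the standing hypothesis $\rho<1-C\nu r\log^2(n)/n$ (so that $\nu r\lesssim n/\log^2(n)$) is smaller than $n^{-C_m}$. Combining the mean bound with this fluctuation bound gives $\left\lVert\mathcal{P}_{\mathcal{T}}\mathcal{P}_{\mathcal{V}}\mathcal{P}_{\mathcal{T}}\right\rVert\leq\rho+\epsilon'$, hence $\left\lVert\mathcal{P}_{\mathcal{V}}\mathcal{P}_{\mathcal{T}}\right\rVert\leq\sqrt{\rho+\epsilon'}$.

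The main obstacle is not the concentration itself—which is essentially identical to Proposition \ref{Prop:Aux1}—but securing the constant precisely: one must bound the mean term directly by $\rho$ rather than routing it through the proposition, which would only deliver $\sqrt{\rho(3/2+\epsilon)}$, and one must carry the extra factor $\rho$ in the variance so that the tail still decays fast enough in the small-$\rho$ regime where the corollary is applied. I would also verify the Bernstein dimension prefactor, since the operators act effectively on $\mathcal{T}$ (of dimension at most $2nr$), to confirm the claimed very high probability survives the union over that dimension.
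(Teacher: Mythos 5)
Correct, and essentially the paper's own argument: the paper likewise passes to $\|\mathcal{P}_{\mathcal{T}}\mathcal{P}_{\mathcal{V}}\mathcal{P}_{\mathcal{T}}\|$, splits it into the mean $\rho\,\mathcal{P}_{\mathcal{T}}\mathcal{P}_{\mathcal{O}}\mathcal{P}_{\mathcal{T}}$ (bounded by $\rho$) plus the identical centered fluctuation $\sum_{(i,j)\in\mathsf{O}}(\delta_{ij}-\rho)\,\mathcal{P}_{\mathcal{T}}\boldsymbol{E}_{ij}\langle\boldsymbol{E}_{ij},\cdot\rangle\mathcal{P}_{\mathcal{T}}$, and controls the latter by Bernstein, the only difference being that the paper imports this concentration estimate from the proof of Proposition \ref{Prop:Aux1} applied to $\mathsf{W}^{\bot}\sim Ber(1-\rho)$ (in the form $\|(1-\rho)\mathcal{P}_{\mathcal{T}}\mathcal{P}_{\mathcal{O}}\mathcal{P}_{\mathcal{T}}-\mathcal{P}_{\mathcal{T}}\mathcal{P}_{\mathcal{O}}\mathcal{P}_{\mathcal{W}^{\bot}}\mathcal{P}_{\mathcal{T}}\|\leq(1-\rho)\epsilon$) rather than rerunning it inline as you do. Your caution about the naive route through the proposition's conclusion (which would only give $\sqrt{\rho(3/2+\epsilon)}$) is exactly the slack the paper also avoids by isolating the mean term before invoking concentration.
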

\begin{proof}
	The proof of Prop \ref{Prop:Aux1} gives that
	$$\left\lVert\mathcal{P}_\mathcal{T}\mathcal{P}_\mathcal{O}\mathcal{P}_\mathcal{T}-(1-\rho)^{-1}\mathcal{P}_\mathcal{T}\mathcal{P}_\mathcal{O}\mathcal{P}_{\mathcal{W}^\bot}\mathcal{P}_\mathcal{T}\right\lVert\leq \epsilon.$$
	Thus,
	\begin{align}
		&\quad\left\lVert\mathcal{P}_{\mathcal{V}}\mathcal{P}_{\mathcal{T}}\right\lVert^2=\left\lVert\mathcal{P}_{\mathcal{T}}\mathcal{P}_{\mathcal{V}}\mathcal{P}_{\mathcal{T}}\right\lVert\notag\\&=\left\lVert\mathcal{P}_{\mathcal{T}}\mathcal{P}_{\mathcal{O}}\left(\mathcal{P}_{\mathcal{I}}-\mathcal{P}_{\mathcal{W}^\bot}\right)\mathcal{P}_{\mathcal{T}}\right\lVert\notag\\&\leq\left\lVert(1-\rho)\mathcal{P}_{\mathcal{T}}\mathcal{P}_{\mathcal{O}}\mathcal{P}_{\mathcal{T}}-\mathcal{P}_{\mathcal{T}}\mathcal{P}_{\mathcal{O}}\mathcal{P}_{\mathcal{W}^\bot}\mathcal{P}_{\mathcal{T}}\right\lVert+\rho\left\lVert\mathcal{P}_{\mathcal{T}}\mathcal{P}_{\mathcal{O}}\mathcal{P}_{\mathcal{T}}\right\lVert\notag\\&\leq(1-\rho)\epsilon+\rho=\rho+\epsilon'.
	\end{align}
\end{proof}
\begin{proposition}
	\textit{Suppose $\mathsf{M}\thicksim Ber(\pi)$ is a Bernoulli random subsets and $\mathcal{M}$ the corresponding linear space accommodating those matrices supported on $\mathsf{M}$. Let $\mathcal{O}$ and $\boldsymbol{L}_0\in\mathbb{R}^{n\times n}$ be configured as Thm \ref{Th: Main Theorem}. Fix any constant $\epsilon>0$. Then there exists positive numerical constants $C'_a$ and $\gamma<1$ such that if $\pi>C'_a\frac{\nu r\log(n)}{\epsilon^2n},$ with high probability,
		$$\left\lVert\left(\mathcal{I}-\pi^{-1}\mathcal{P}_\mathcal{T}\mathcal{P}_\mathcal{O}\mathcal{P}_\mathcal{M}\right)\left[\boldsymbol{D}\right]\right\lVert_\infty\leq \left(\gamma+\epsilon\right)\left\lVert\boldsymbol{D}\right\lVert_\infty$$
		for any $\boldsymbol{D}\in\mathcal{T}$.
	}
	\label{Prop:Aux2}
\end{proposition}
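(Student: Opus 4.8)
The plan is to exploit linearity together with the membership $\boldsymbol{D}\in\mathcal{T}$ to split the operator into a deterministic, sampling-induced bias and a zero-mean stochastic fluctuation, controlling the former by RAIIP and the latter by a scalar Bernstein inequality applied entrywise. Concretely, since $\mathcal{I}[\boldsymbol{D}]=\mathcal{P}_{\mathcal{T}}[\boldsymbol{D}]=\mathcal{P}_{\mathcal{T}}\mathcal{P}_{\mathcal{O}}[\boldsymbol{D}]+\mathcal{P}_{\mathcal{T}}\mathcal{P}_{\mathcal{O}^\bot}[\boldsymbol{D}]$, I would write
$$\left(\mathcal{I}-\pi^{-1}\mathcal{P}_{\mathcal{T}}\mathcal{P}_{\mathcal{O}}\mathcal{P}_{\mathcal{M}}\right)[\boldsymbol{D}]=\underbrace{\mathcal{P}_{\mathcal{T}}\mathcal{P}_{\mathcal{O}^\bot}[\boldsymbol{D}]}_{\text{bias}}+\underbrace{\left(\mathcal{P}_{\mathcal{T}}\mathcal{P}_{\mathcal{O}}-\pi^{-1}\mathcal{P}_{\mathcal{T}}\mathcal{P}_{\mathcal{O}}\mathcal{P}_{\mathcal{M}}\right)[\boldsymbol{D}]}_{\text{fluctuation}}.$$
This decomposition is exactly where deterministic sampling departs from the random-sampling analyses of \cite{candes2011robust,li2013compressed}: there $\mathsf{O}=[n]\times[n]$ and the bias vanishes, whereas here it is $\Theta(1)$ in $\ell_\infty$ and cannot be shrunk by enlarging $\pi$.

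For the bias term I would invoke RAIIP (Def.~\ref{Def: RAIIP}) directly through Def.~\ref{Def: Restricted infty-norm}: since $\boldsymbol{D}\in\mathcal{T}$,
$$\left\|\mathcal{P}_{\mathcal{T}}\mathcal{P}_{\mathcal{O}^\bot}[\boldsymbol{D}]\right\|_\infty\le\left\|\mathcal{P}_{\mathcal{T}}\mathcal{P}_{\mathcal{O}^\bot}\right\|_{\mathcal{T},\infty}\left\|\boldsymbol{D}\right\|_\infty=:\gamma\left\|\boldsymbol{D}\right\|_\infty,$$
and RAIIP asserts precisely $\gamma=\left\|\mathcal{P}_{\mathcal{T}}\mathcal{P}_{\mathcal{O}^\bot}\right\|_{\mathcal{T},\infty}<1$. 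This both identifies the constant $\gamma<1$ in the statement and explains why the contraction factor is $\gamma+\epsilon$ rather than the $\epsilon$ one obtains in the random case.

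For the fluctuation term I would argue entrywise, treating $\boldsymbol{D}$ as fixed and independent of $\mathsf{M}$ (the situation in which this estimate is used when constructing the dual certificate in Lemma~\ref{Le: Certificate1}). Fixing $(a,b)$, the corresponding entry equals $\sum_{(i,j)\in\mathsf{O}}\xi_{ij}$ with $\xi_{ij}=\left(1-\pi^{-1}\delta_{ij}\right)\left(\mathcal{P}_{\mathcal{T}}[\boldsymbol{E}_{ab}]\right)_{ij}D_{ij}$, the $\delta_{ij}$ being the independent Bernoulli$(\pi)$ variables defining $\mathsf{M}$; these summands are independent and mean-zero. I would bound them through incoherence exactly as in the proof of Prop.~\ref{Prop:Aux1}: the variance is governed by $\sum_{(i,j)}\left(\mathcal{P}_{\mathcal{T}}[\boldsymbol{E}_{ab}]\right)_{ij}^2=\left\|\mathcal{P}_{\mathcal{T}}[\boldsymbol{E}_{ab}]\right\|_F^2\le\frac{2\nu r}{n}$, so $\sum_{(i,j)\in\mathsf{O}}\mathcal{E}[\xi_{ij}^2]\le\frac{2\nu r}{n\pi}\left\|\boldsymbol{D}\right\|_\infty^2$, while the almost-sure summand bound uses the sharp entrywise estimate $\left|\left(\mathcal{P}_{\mathcal{T}}[\boldsymbol{E}_{ab}]\right)_{ij}\right|\le\frac{3\nu r}{n}$ (from $|(\boldsymbol{U}\boldsymbol{U}^*)_{ia}|,|(\boldsymbol{V}\boldsymbol{V}^*)_{bj}|\le\frac{\nu r}{n}$), giving $|\xi_{ij}|\le\frac{3\nu r}{n\pi}\left\|\boldsymbol{D}\right\|_\infty$. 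A scalar Bernstein inequality with threshold $t=\epsilon\left\|\boldsymbol{D}\right\|_\infty$ then drives the exponent to order $C'_a\log(n)$ once $\pi>C'_a\frac{\nu r\log(n)}{\epsilon^2 n}$, and a union bound over the $n^2$ entries yields $\left\|\left(\mathcal{P}_{\mathcal{T}}\mathcal{P}_{\mathcal{O}}-\pi^{-1}\mathcal{P}_{\mathcal{T}}\mathcal{P}_{\mathcal{O}}\mathcal{P}_{\mathcal{M}}\right)[\boldsymbol{D}]\right\|_\infty\le\epsilon\left\|\boldsymbol{D}\right\|_\infty$ with probability at least $1-2n^{2-C'_a/(2(2+\epsilon))}$, which is high for $C'_a$ large enough. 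The triangle inequality combined with the bias bound then produces the claimed factor $\gamma+\epsilon$.

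The main obstacle is twofold. Conceptually, the departure from the classical mean-zero analysis is the non-vanishing bias $\mathcal{P}_{\mathcal{T}}\mathcal{P}_{\mathcal{O}^\bot}[\boldsymbol{D}]$; recognizing that RAIIP is exactly the hypothesis that tames it, turning a would-be $\Theta(1)$ error into a strict contraction $\gamma<1$, is the crux. Technically, the delicate point in the Bernstein step is that one must use the sharp $O(\nu r/n)$ entrywise bound on $\mathcal{P}_{\mathcal{T}}[\boldsymbol{E}_{ab}]$ for the almost-sure summand bound; replacing it by the cruder $\left\|\mathcal{P}_{\mathcal{T}}[\boldsymbol{E}_{ab}]\right\|_F\le\sqrt{2\nu r/n}$ would inflate the Bernstein denominator by a factor $\sqrt{n/(\nu r)}$ and destroy the $\nu r\log(n)/n$ sampling rate. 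I would also be careful that the estimate is genuinely a fixed-$\boldsymbol{D}$ statement rather than a uniform-over-$\mathcal{T}$ one, since a net argument over the unit $\ell_\infty$-ball of $\mathcal{T}$ would demand a sampling rate growing like $\nu r^2$, incompatible with $\pi\le 1$.
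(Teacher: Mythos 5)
Your proposal is correct and takes essentially the same route as the paper's own proof: the identical decomposition of $\left(\mathcal{I}-\pi^{-1}\mathcal{P}_{\mathcal{T}}\mathcal{P}_{\mathcal{O}}\mathcal{P}_{\mathcal{M}}\right)[\boldsymbol{D}]$ into the RAIIP-controlled bias $\mathcal{P}_{\mathcal{T}}\mathcal{P}_{\mathcal{O}^\bot}[\boldsymbol{D}]$ plus the zero-mean fluctuation $\left(\mathcal{P}_{\mathcal{T}}\mathcal{P}_{\mathcal{O}}-\pi^{-1}\mathcal{P}_{\mathcal{T}}\mathcal{P}_{\mathcal{O}}\mathcal{P}_{\mathcal{M}}\right)[\boldsymbol{D}]$, with the fluctuation handled entrywise (for fixed $\boldsymbol{D}$) by a scalar Bernstein inequality and a union bound over the $n^2$ entries, exactly as in the paper. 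The only cosmetic difference is your almost-sure summand bound via the entrywise incoherence estimate $\left|\left(\mathcal{P}_{\mathcal{T}}[\boldsymbol{E}_{ab}]\right)_{ij}\right|\le 3\nu r/n$, whereas the paper bounds $\left\langle\mathcal{P}_{\mathcal{T}}[\boldsymbol{E}_{ij}],\mathcal{P}_{\mathcal{T}}[\boldsymbol{E}_{ab}]\right\rangle$ by the larger of the two squared Frobenius norms, which is of the same order $\nu r/n$ and yields the same sampling rate.
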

\begin{proof}
	\textit{Firstly, we apply Bernstein inequality to bound the infinity norm of 
		\begin{equation}
			\begin{aligned}
				\boldsymbol{D}'&\triangleq\left(\mathcal{P}_\mathcal{T}\mathcal{P}_\mathcal{O}-\pi^{-1}\mathcal{P}_\mathcal{T}\mathcal{P}_\mathcal{O}\mathcal{P}_\mathcal{M}\right)\left[\boldsymbol{D}\right]\\&=\sum_{(i,j)\in\mathsf{O}}{\left(1-\pi^{-1}\delta_{ij}\right){D}_{ij}\mathcal{P}_\mathcal{T}\left[\boldsymbol{E}_{ij}\right]}.
			\end{aligned}
			\nonumber
		\end{equation}
		where $\delta_{ij}$ are independent Bernoulli variables with the same distribution $Pr(\delta_{ij}=1)=\pi,\,Pr(\delta_{ij}=0)=1-\pi.$
		To do so, we consider each element of it, \textit{i.e.,}
		\begin{equation}
			\begin{aligned}
				\begin{matrix}
					{D}'_{i_0j_0}=\sum_{(i,j)\in\mathsf{O}}&\underbrace{{\left(1-\pi^{-1}\delta_{ij}\right){D}_{ij}\left\langle\mathcal{P}_\mathcal{T}\left[\boldsymbol{E}_{ij}\right],\boldsymbol{E}_{i_0j_0}\right\rangle}}\\&{H}_{ij}.
				\end{matrix}.
			\end{aligned}
			\nonumber
		\end{equation}
		Then we have $\mathcal{E}\left[{H}_{ij}\right]=0$ and
		\begin{itemize}
			\item
			a bound on $\left\lvert{H}_{ij}\right\lvert$ with probability 1:
			\begin{equation}
				\begin{aligned}
					\left\lvert {H}_{ij}\right\lvert&=\left\lvert {\left(1-\pi^{-1}\delta_{ij}\right){D}_{ij}\left\langle\mathcal{P}_\mathcal{T}\left[\boldsymbol{E}_{ij}\right],\boldsymbol{E}_{i_0j_0}\right\rangle}\right\lvert\\&\leq \pi^{-1}\left\lVert\boldsymbol{D}\right\lVert_\infty\left\langle\mathcal{P}_\mathcal{T}\left[\boldsymbol{E}_{ij}\right],\mathcal{P}_\mathcal{T}\left[\boldsymbol{E}_{i_0j_0}\right]\right\rangle\\&\leq\pi^{-1}\left\lVert\boldsymbol{D}\right\lVert_\infty\max\left\{\left\lVert\mathcal{P}_\mathcal{T}\left[\boldsymbol{E}_{ij}\right]\right\lVert_F^2,\left\lVert\mathcal{P}_\mathcal{T}\left[\boldsymbol{E}_{i_0j_0}\right]\right\lVert_F^2\right\}\\&\leq \frac{\epsilon^2n}{C\nu r\log(n)}{\frac{2\nu r}{n}}\left\lVert\boldsymbol{D}\right\lVert_\infty\\&=\frac{2\epsilon^2}{C\log(n)}\left\lVert\boldsymbol{D}\right\lVert_\infty.
				\end{aligned}
				\nonumber
			\end{equation}
			\item 
			a bound on the sum of variance:
			\begin{equation}
				\begin{aligned}
					&\quad\sum_{(i,j)\in\mathsf{O}}\mathcal{E}\left[{H}_{ij}^2\right]\\&=\sum_{(i,j)\in\mathsf{O}}\mathcal{E}\left[{\left(1-\pi^{-1}\delta_{ij}\right)^2{D}_{ij}^2\left\langle\mathcal{P}_\mathcal{T}\left[\boldsymbol{E}_{ij}\right],\boldsymbol{E}_{i_0j_0}\right\rangle^2}\right]\\&\leq\pi^{-1}\left\lVert\boldsymbol{D}\right\lVert_\infty^2\sum_{(i,j)\in\mathsf{O}}\left\langle\boldsymbol{E}_{ij},\mathcal{P}_\mathcal{T}\left[\boldsymbol{E}_{i_0j_0}\right]\right\rangle^2\\&\leq\pi^{-1}\left\lVert\boldsymbol{D}\right\lVert_\infty^2\left\lVert\mathcal{P}_\mathcal{T}\left[\boldsymbol{E}_{i_0j_0}\right]\right\lVert_F^2\\&\leq \frac{\epsilon^2n}{C'_a\nu r\log(n)}{\frac{2\nu r}{n}}\left\lVert\boldsymbol{D}\right\lVert_\infty^2\\&=\frac{2\epsilon^2}{C'_a\log(n)}\left\lVert\boldsymbol{D}\right\lVert_\infty^2.
				\end{aligned}
				\nonumber
			\end{equation}
		\end{itemize}
		Hence, we invoke Bernstein inequality to obtain that
		\begin{equation}
			\begin{aligned}
				&\quad Pr\left(\left\lVert\boldsymbol{D}'\right\lVert_\infty>\epsilon\left\lVert\boldsymbol{D}\right\lVert_\infty\right)\\&\leq \sum_{i_0j_0}Pr\left(\left\lvert\sum_{(i,j)\in\mathsf{O}} H_{ij}\right\lvert>\epsilon\left\lVert\boldsymbol{D}\right\lVert_\infty\right)\\&\leq 2n^2\exp\left(-\frac{\frac{\epsilon^2\left\lVert\boldsymbol{D}\right\lVert_\infty^2}{2}}{\frac{2\epsilon^2}{C'_a\log(n)}\left\lVert\boldsymbol{D}\right\lVert_\infty^2+\frac{\frac{2\epsilon^2}{C'_a\log(n)}\left\lVert\boldsymbol{D}\right\lVert_\infty\epsilon\left\lVert\boldsymbol{D}\right\lVert_\infty}{3}}\right)\\&=2n^{2-\frac{3C'_a}{12+4\epsilon}}.
			\end{aligned}
			\nonumber
		\end{equation}
		Recall that $\mathcal{P}_\mathcal{O}$ satisfies RAIIP, thus there exists $\gamma<1$, scu h that $\left\lVert\mathcal{P}_\mathcal{T}\mathcal{P}_\mathcal{O}^\bot\left[\boldsymbol{D}\right]\right\lVert_\infty\leq\gamma \left\lVert\boldsymbol{D}\right\lVert_\infty.$ 
		It then follows that with probability at least $1-2n^{2-\frac{3C'_a}{12+4\epsilon}}$,
		\begin{equation}
			\begin{aligned}
				&\quad\left\lVert\left(\mathcal{I}-\pi^{-1}\mathcal{P}_\mathcal{T}\mathcal{P}_\mathcal{O}\mathcal{P}_\mathcal{M}\right)\left[\boldsymbol{D}\right]\right\lVert_\infty\\&=\left\lVert\left(\mathcal{P}_\mathcal{T}\mathcal{P}_\mathcal{O}-\pi^{-1}\mathcal{P}_\mathcal{T}\mathcal{P}_\mathcal{O}\mathcal{P}_\mathcal{M}\right)\left[\boldsymbol{D}\right]+\mathcal{P}_\mathcal{T}\mathcal{P}_\mathcal{O}^\bot\left[\boldsymbol{D}\right]\right\lVert_\infty\\&\leq\left\lVert\left(\mathcal{P}_\mathcal{T}\mathcal{P}_\mathcal{O}-\pi^{-1}\mathcal{P}_\mathcal{T}\mathcal{P}_\mathcal{O}\mathcal{P}_\mathcal{M}\right)\left[\boldsymbol{D}\right]\right\lVert_\infty+\left\lVert\mathcal{P}_\mathcal{T}\mathcal{P}_\mathcal{O}^\bot\left[\boldsymbol{D}\right]\right\lVert_\infty\\&\leq\left(\gamma+\epsilon\right)\left\lVert\boldsymbol{D}\right\lVert_\infty.
			\end{aligned}
			\nonumber
		\end{equation}
	}
\end{proof}
\begin{proposition}
	\textit{Under the assumption of Thm {\ref{Th: Main Theorem}}, with probability high enough, we have that for any matrix $\boldsymbol{P}\in\mathbb{R}^{n\times n}$, $$\left\lVert\mathcal{P}_{\mathcal{T}}\left[\boldsymbol{P}\right]\right\lVert_F\leq(n+1)\left\lVert\mathcal{P}_{\mathcal{N}}\left[\boldsymbol{P}\right]\right\lVert_F+n\left\lVert\mathcal{P}_{\mathcal{T}^\bot}\left[\boldsymbol{P}\right]\right\lVert_F$$ and $\mathcal{N}^\bot\cap\mathcal{T}=\left\{\boldsymbol{0}\right\}.$}
	\label{Prop:Aux3}
\end{proposition}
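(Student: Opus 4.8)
The plan is to reduce both assertions to a single quantitative statement: that the operator $\mathcal{P}_{\mathcal{N}^\bot}\mathcal{P}_{\mathcal{T}}$ has operator norm strictly below $1$. The key structural observation is that $\mathsf{N}=\mathsf{O}\cap\mathsf{W}^\bot$ has complement $\mathsf{N}^c=\mathsf{O}^c\sqcup\mathsf{V}$ (a \emph{disjoint} union, since $\mathsf{V}=\mathsf{O}\cap\mathsf{W}\subseteq\mathsf{O}$), so the supports of $\mathcal{O}^\bot$ and $\mathcal{V}$ are orthogonal and therefore $\mathcal{P}_{\mathcal{N}^\bot}=\mathcal{P}_{\mathcal{O}^\bot}+\mathcal{P}_{\mathcal{V}}$. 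Consequently, for any $\boldsymbol{X}\in\mathcal{T}$ with $\left\lVert\boldsymbol{X}\right\lVert_F=1$,
$$\left\lVert\mathcal{P}_{\mathcal{N}^\bot}\left[\boldsymbol{X}\right]\right\lVert_F^2=\left\lVert\mathcal{P}_{\mathcal{O}^\bot}\left[\boldsymbol{X}\right]\right\lVert_F^2+\left\lVert\mathcal{P}_{\mathcal{V}}\left[\boldsymbol{X}\right]\right\lVert_F^2.$$

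First I would bound the two terms separately. For the corruption term, Corollary \ref{Cor:Aux1} gives $\left\lVert\mathcal{P}_{\mathcal{V}}\mathcal{P}_{\mathcal{T}}\right\lVert^2\le\rho+\epsilon'$ with high probability, so $\left\lVert\mathcal{P}_{\mathcal{V}}\left[\boldsymbol{X}\right]\right\lVert_F^2\le\rho+\epsilon'$. For the unsampled term, I would reuse the isomerism estimate already extracted inside the proof of Proposition \ref{Prop:Aux1}, namely $\left\lVert\mathcal{P}_{\mathcal{T}}\mathcal{P}_{\mathcal{O}^\bot}\mathcal{P}_{\mathcal{T}}\right\lVert\le 2\left(1-\gamma_{\mathcal{O},\mathcal{O}^T}\left(\boldsymbol{L}_0\right)\right)$, which via $\left\lVert\mathcal{P}_{\mathcal{O}^\bot}\left[\boldsymbol{X}\right]\right\lVert_F^2=\left\langle\boldsymbol{X},\mathcal{P}_{\mathcal{T}}\mathcal{P}_{\mathcal{O}^\bot}\mathcal{P}_{\mathcal{T}}\left[\boldsymbol{X}\right]\right\rangle$ yields $\left\lVert\mathcal{P}_{\mathcal{O}^\bot}\left[\boldsymbol{X}\right]\right\lVert_F^2<\tfrac{1}{2}$ by the premise $\gamma_{\mathsf{O},\mathsf{O}^T}\left(\boldsymbol{L}_0\right)>\tfrac{3}{4}$. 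Summing, and using $\rho<\rho_s$ with $\rho_s$ and $\epsilon'$ small enough, gives
$$\alpha^2\triangleq\left\lVert\mathcal{P}_{\mathcal{N}^\bot}\mathcal{P}_{\mathcal{T}}\right\lVert^2\le\rho+\epsilon'+2\left(1-\gamma_{\mathcal{O},\mathcal{O}^T}\left(\boldsymbol{L}_0\right)\right)<1,$$ an absolute constant bounded away from $1$.

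From $\alpha<1$ the second claim is immediate. For any $\boldsymbol{X}\in\mathcal{T}$ one has $\left\lVert\mathcal{P}_{\mathcal{N}}\left[\boldsymbol{X}\right]\right\lVert_F^2=\left\lVert\boldsymbol{X}\right\lVert_F^2-\left\lVert\mathcal{P}_{\mathcal{N}^\bot}\left[\boldsymbol{X}\right]\right\lVert_F^2\ge(1-\alpha^2)\left\lVert\boldsymbol{X}\right\lVert_F^2$; hence if $\boldsymbol{X}\in\mathcal{N}^\bot\cap\mathcal{T}$ then $\mathcal{P}_{\mathcal{N}}\left[\boldsymbol{X}\right]=\boldsymbol{0}$ forces $\boldsymbol{X}=\boldsymbol{0}$, establishing $\mathcal{N}^\bot\cap\mathcal{T}=\left\{\boldsymbol{0}\right\}$. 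The same inequality supplies the restricted lower bound $\left\lVert\boldsymbol{X}\right\lVert_F\le(1-\alpha^2)^{-1/2}\left\lVert\mathcal{P}_{\mathcal{N}}\left[\boldsymbol{X}\right]\right\lVert_F$ valid for every $\boldsymbol{X}\in\mathcal{T}$.

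Finally, to obtain the displayed inequality I would apply this bound to $\boldsymbol{X}=\mathcal{P}_{\mathcal{T}}\left[\boldsymbol{P}\right]$, write $\mathcal{P}_{\mathcal{N}}\left[\boldsymbol{X}\right]=\mathcal{P}_{\mathcal{N}}\left[\boldsymbol{P}\right]-\mathcal{P}_{\mathcal{N}}\mathcal{P}_{\mathcal{T}^\bot}\left[\boldsymbol{P}\right]$, and use the triangle inequality together with $\left\lVert\mathcal{P}_{\mathcal{N}}\mathcal{P}_{\mathcal{T}^\bot}\right\lVert\le1$ to get $\left\lVert\mathcal{P}_{\mathcal{T}}\left[\boldsymbol{P}\right]\right\lVert_F\le(1-\alpha^2)^{-1/2}\big(\left\lVert\mathcal{P}_{\mathcal{N}}\left[\boldsymbol{P}\right]\right\lVert_F+\left\lVert\mathcal{P}_{\mathcal{T}^\bot}\left[\boldsymbol{P}\right]\right\lVert_F\big)$. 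Since $(1-\alpha^2)^{-1/2}$ is an absolute constant, for large $n$ it is dominated by $n$, and relaxing the two coefficients to $n+1$ and $n$ gives the stated bound. The main obstacle is the middle step: showing that the two contributions to $\left\lVert\mathcal{P}_{\mathcal{N}^\bot}\mathcal{P}_{\mathcal{T}}\right\lVert^2$—one random (the corrupted support $\mathcal{V}$) and one deterministic (the unsampled support $\mathcal{O}^\bot$ governed by isomerism)—can be controlled simultaneously and added to a quantity strictly below $1$; once this gap is secured, the remainder is an orthogonal-decomposition identity plus triangle inequalities.
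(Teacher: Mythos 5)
Your proposal is correct, but it takes a genuinely different route from the paper's proof. The paper never forms your key decomposition $\mathcal{P}_{\mathcal{N}^\bot}=\mathcal{P}_{\mathcal{O}^\bot}+\mathcal{P}_{\mathcal{V}}$; instead it invokes Proposition \ref{Prop:Aux1} with $\mathcal{M}=\mathcal{W}^\bot$ (so that $\mathcal{P}_{\mathcal{T}}\mathcal{P}_{\mathcal{O}}\mathcal{P}_{\mathcal{M}}\mathcal{P}_{\mathcal{T}}=\mathcal{P}_{\mathcal{T}}\mathcal{P}_{\mathcal{N}}\mathcal{P}_{\mathcal{T}}$) and sandwiches the cross term $\mathcal{P}_{\mathcal{N}}\mathcal{P}_{\mathcal{T}}\mathcal{P}_{\mathcal{N}^\bot}\left[\boldsymbol{P}\right]$: from above by $\lVert\mathcal{P}_{\mathcal{T}^\bot}\mathcal{P}_{\mathcal{N}^\bot}\left[\boldsymbol{P}\right]\rVert_F$ via the identity $\mathcal{P}_{\mathcal{N}}\mathcal{P}_{\mathcal{T}}\mathcal{P}_{\mathcal{N}^\bot}=-\mathcal{P}_{\mathcal{N}}\mathcal{P}_{\mathcal{T}^\bot}\mathcal{P}_{\mathcal{N}^\bot}$, and from below by $\frac{1}{n}\lVert\mathcal{P}_{\mathcal{T}}\mathcal{P}_{\mathcal{N}^\bot}\left[\boldsymbol{P}\right]\rVert_F$ via the quadratic form of $(1-\rho)^{-1}\mathcal{P}_{\mathcal{T}}\mathcal{P}_{\mathcal{N}}\mathcal{P}_{\mathcal{T}}$, which gives $\lVert\mathcal{P}_{\mathcal{T}}\mathcal{P}_{\mathcal{N}^\bot}\left[\boldsymbol{P}\right]\rVert_F\leq n\lVert\mathcal{P}_{\mathcal{T}^\bot}\mathcal{P}_{\mathcal{N}^\bot}\left[\boldsymbol{P}\right]\rVert_F$ and then the claim by triangle inequalities. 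You instead establish the angle condition $\lVert\mathcal{P}_{\mathcal{N}^\bot}\mathcal{P}_{\mathcal{T}}\rVert\leq\alpha<1$ head-on, using the disjoint-union observation $\mathsf{N}^c=\mathsf{O}^c\sqcup\mathsf{V}$ to split the bound into a random piece (Corollary \ref{Cor:Aux1}) and a deterministic piece (the relative well-conditionedness estimate $\lVert\mathcal{P}_{\mathcal{T}}\mathcal{P}_{\mathcal{O}^\bot}\mathcal{P}_{\mathcal{T}}\rVert\leq 2\left(1-\gamma_{\mathcal{O},\mathcal{O}^T}\left(\boldsymbol{L}_0\right)\right)<\frac{1}{2}$); these are exactly the two ingredients that the paper's Proposition \ref{Prop:Aux1} fuses internally, but you recombine them at the level of this proposition rather than inside it. What your route buys is a dimension-free constant: the restricted injectivity bound $\lVert\boldsymbol{X}\rVert_F\leq\left(1-\alpha^2\right)^{-1/2}\lVert\mathcal{P}_{\mathcal{N}}\left[\boldsymbol{X}\right]\rVert_F$ on $\mathcal{T}$ with $\left(1-\alpha^2\right)^{-1/2}=O(1)$, strictly stronger than the paper's $n$-dependent coefficients, which you then deliberately relax to $n+1$ and $n$ (valid once $n\geq 2$, so the ``large $n$'' caveat is harmless). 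What the paper's route buys is economy: it reuses Proposition \ref{Prop:Aux1} verbatim as a black box and needs no structural analysis of $\mathsf{N}^c$. Both arguments condition on the same high-probability events, so the probabilistic bookkeeping is identical, and both derive $\mathcal{N}^\bot\cap\mathcal{T}=\left\{\boldsymbol{0}\right\}$ as an immediate corollary of the respective quantitative bound.
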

\begin{proof}
	Firstly, we have that
	\begin{align}
		\left\lVert\mathcal{P}_{\mathcal{N}}\mathcal{P}_{\mathcal{T}}\mathcal{P}_{\mathcal{N}^\bot}\left[\boldsymbol{P}\right]\right\lVert_F&=\left\lVert\mathcal{P}_{\mathcal{N}}\left(\mathcal{I}-\mathcal{P}_{\mathcal{T}^\bot}\right)\mathcal{P}_{\mathcal{N}^\bot}\left[\boldsymbol{P}\right]\right\lVert_F\notag\\&=\left\lVert\mathcal{P}_{\mathcal{N}}\mathcal{P}_{\mathcal{T}^\bot}\mathcal{P}_{\mathcal{N}^\bot}\left[\boldsymbol{P}\right]\right\lVert_F\notag\\&\leq\left\lVert\mathcal{P}_{\mathcal{T}^\bot}\mathcal{P}_{\mathcal{N}^\bot}\left[\boldsymbol{P}\right]\right\lVert_F. 
	\end{align}
	Note that by Prop \ref{Prop:Aux1}, $\left\lVert\mathcal{P}_{\mathcal{T}}-(1-\rho)^{-1}\mathcal{P}_\mathcal{T}\mathcal{P}_\mathcal{N}\mathcal{P}_\mathcal{T}\right\lVert\leq \frac{1}{2}+\epsilon$. It is further obtained that
	\begin{align}
		&\quad\left\lVert\mathcal{P}_{\mathcal{N}}\mathcal{P}_{\mathcal{T}}\mathcal{P}_{\mathcal{N}^\bot}\left[\boldsymbol{P}\right]\right\lVert_F^2\notag\\&=\left\langle\mathcal{P}_{\mathcal{N}}\mathcal{P}_{\mathcal{T}}\mathcal{P}_{\mathcal{N}^\bot}\left[\boldsymbol{P}\right],\mathcal{P}_{\mathcal{N}}\mathcal{P}_{\mathcal{T}}\mathcal{P}_{\mathcal{N}^\bot}\left[\boldsymbol{P}\right]\right\rangle\notag\\&=\left\langle\mathcal{P}_{\mathcal{T}}\mathcal{P}_{\mathcal{N}^\bot}\left[\boldsymbol{P}\right],\mathcal{P}_{\mathcal{T}}\mathcal{P}_{\mathcal{N}}\mathcal{P}_{\mathcal{T}}\mathcal{P}_{\mathcal{N}^\bot}\left[\boldsymbol{P}\right]\right\rangle\notag\\&=(1-\rho)\left\langle\mathcal{P}_{\mathcal{T}}\mathcal{P}_{\mathcal{N}^\bot}\left[\boldsymbol{P}\right],(1-\rho)^{-1}\mathcal{P}_{\mathcal{T}}\mathcal{P}_{\mathcal{N}}\mathcal{P}_{\mathcal{T}}\mathcal{P}_{\mathcal{N}^\bot}\left[\boldsymbol{P}\right]\right\rangle\notag\\&=(1-\rho)\left\langle\mathcal{P}_{\mathcal{T}}\mathcal{P}_{\mathcal{N}^\bot}\left[\boldsymbol{P}\right],\mathcal{P}_{\mathcal{T}}\mathcal{P}_{\mathcal{N}^\bot}\left[\boldsymbol{P}\right]\right\rangle\notag\\&\quad+(1-\rho)\big\langle\mathcal{P}_{\mathcal{T}}\mathcal{P}_{\mathcal{N}^\bot}\left[\boldsymbol{P}\right],\notag\\&\quad\quad\left((1-\rho)^{-1}\mathcal{P}_\mathcal{T}\mathcal{P}_\mathcal{N}\mathcal{P}_\mathcal{T}-\mathcal{P}_{\mathcal{T}}\right)\mathcal{P}_{\mathcal{N}^\bot}\left[\boldsymbol{P}\right]\big\rangle\notag\\&\geq(1-\rho)\left(\left\lVert\mathcal{P}_{\mathcal{T}}\mathcal{P}_{\mathcal{N}^\bot}\left[\boldsymbol{P}\right]\right\lVert_F^2-\left(\frac{1}{2}+\epsilon\right)\left\lVert\mathcal{P}_{\mathcal{T}}\mathcal{P}_{\mathcal{N}^\bot}\left[\boldsymbol{P}\right]\right\lVert_F^2\right)\notag\\&=(1-\rho)\left(\frac{1}{2}-\epsilon\right)\left\lVert\mathcal{P}_{\mathcal{T}}\mathcal{P}_{\mathcal{N}^\bot}\left[\boldsymbol{P}\right]\right\lVert_F^2\notag\\&\geq\frac{1}{n^2}\left\lVert\mathcal{P}_{\mathcal{T}}\mathcal{P}_{\mathcal{N}^\bot}\left[\boldsymbol{P}\right]\right\lVert_F^2
	\end{align}
	provided small enough $\left(\rho,\epsilon\right)$ or large enough $n$. Hence $$\left\lVert\mathcal{P}_{\mathcal{T}}\mathcal{P}_{\mathcal{N}^\bot}\left[\boldsymbol{P}\right]\right\lVert_F\leq n\left\lVert\mathcal{P}_{\mathcal{T}^\bot}\mathcal{P}_{\mathcal{N}^\bot}\left[\boldsymbol{P}\right]\right\lVert_F,$$ and $\mathcal{N}^\bot\cap\mathcal{T}=\left\{\boldsymbol{0}\right\}$ follows. Eventually, we observe that
	\begin{align}
		&\quad\left\lVert\mathcal{P}_{\mathcal{T}}\left[\boldsymbol{P}\right]\right\lVert_F\notag\\&\leq\left\lVert\mathcal{P}_{\mathcal{T}}\mathcal{P}_{\mathcal{N}}\left[\boldsymbol{P}\right]\right\lVert_F+\left\lVert\mathcal{P}_{\mathcal{T}}\mathcal{P}_{\mathcal{N}^\bot}\left[\boldsymbol{P}\right]\right\lVert_F\notag\\&\leq\left\lVert\mathcal{P}_{\mathcal{T}}\mathcal{P}_{\mathcal{N}}\left[\boldsymbol{P}\right]\right\lVert_F+n\left\lVert\mathcal{P}_{\mathcal{T}^\bot}\mathcal{P}_{\mathcal{N}^\bot}\left[\boldsymbol{P}\right]\right\lVert_F\notag\\&\leq\left\lVert\mathcal{P}_{\mathcal{T}}\mathcal{P}_{\mathcal{N}}\left[\boldsymbol{P}\right]\right\lVert_F+n\left(\left\lVert\mathcal{P}_{\mathcal{T}^\bot}\mathcal{P}_{\mathcal{N}}\left[\boldsymbol{P}\right]\right\lVert_F+\left\lVert\mathcal{P}_{\mathcal{T}^\bot}\left[\boldsymbol{P}\right]\right\lVert_F\right)\notag\\&\leq(n+1)\left\lVert\mathcal{P}_{\mathcal{N}}\left[\boldsymbol{P}\right]\right\lVert_F+n\left\lVert\mathcal{P}_{\mathcal{T}^\bot}\left[\boldsymbol{P}\right]\right\lVert_F.
	\end{align}
\end{proof}
\subsection{Proofs of Lemma \ref{Le: Relaxed KKT}-\ref{Le: Certificate2}}
\subsubsection{Proof of Lemma \ref{Le: Relaxed KKT}}
\begin{proof}
	Consider an arbitrary non-zero perturbation $\boldsymbol{P}\in\mathbb{R}^{n\times n}$, we construct $\boldsymbol{\Gamma}_L=\boldsymbol{U}\boldsymbol{V}^\ast+\overline{\boldsymbol{U}}\overline{\boldsymbol{V}}^\ast$ where $\overline{\boldsymbol{U}},\overline{\boldsymbol{V}}$ are obtained from the compact SVD of $\mathcal{P}_{\mathcal{T}^\bot}\left[\boldsymbol{P}\right],$ then $\left\langle\overline{\boldsymbol{U}}\overline{\boldsymbol{V}}^\ast,\mathcal{P}_{\mathcal{T}^\bot}\left[\boldsymbol{P}\right]\right\rangle=\left\lVert\mathcal{P}_{\mathcal{T}^\bot}\left[\boldsymbol{P}\right]\right\lVert_\ast.$ Let $\boldsymbol{\Gamma}_S=\lambda\left(\overline{\boldsymbol{\Sigma}}_0-sgn\left(\mathcal{P}_{\mathcal{N}}\left[\boldsymbol{P}\right]\right)\right)$. Note that by Prop \ref{Prop:Aux3}, $\mathcal{N}^\bot\cap\mathcal{T}=\left\{\boldsymbol{0}\right\}$ and $\left\lVert\mathcal{P}_{\mathcal{T}}\left[\boldsymbol{P}\right]\right\lVert_F\leq(n+1)\left\lVert\mathcal{P}_{\mathcal{N}}\left[\boldsymbol{P}\right]\right\lVert_F+n\left\lVert\mathcal{P}_{\mathcal{T}^\bot}\left[\boldsymbol{P}\right]\right\lVert_F.$ It follows that
	\begin{align}
		&\left\lVert\boldsymbol{L}_0+\boldsymbol{P}\right\lVert_\ast+\lambda\left\lVert\overline{\boldsymbol{S}}_0-\mathcal{P}_{\mathcal{O}}\left[\boldsymbol{P}\right]\right\lVert_1-\left\lVert\boldsymbol{L}_0\right\lVert_\ast-\lambda\left\lVert\overline{\boldsymbol{S}}_0\right\lVert_1\notag\\&\quad\geq\left\langle\boldsymbol{\Gamma}_L,\boldsymbol{P}\right\rangle+\left\langle\boldsymbol{\Gamma}_S,-\mathcal{P}_{\mathcal{O}}\left[\boldsymbol{P}\right]\right\rangle\notag\\&\quad=\left\langle\boldsymbol{\Gamma}_L,\boldsymbol{P}\right\rangle+\left\langle\boldsymbol{\Gamma}_S,-\boldsymbol{P}\right\rangle\notag\\&\quad=\left\langle\boldsymbol{\Gamma}_L-\boldsymbol{\Lambda},\boldsymbol{P}\right\rangle+\left\langle\boldsymbol{\Gamma}_S-\boldsymbol{\Lambda},-\boldsymbol{P}\right\rangle\notag\\&\quad=\left\langle\boldsymbol{U}\boldsymbol{V}^\ast-\mathcal{P}_{\mathcal{T}}\left[\boldsymbol{\Lambda}\right],\mathcal{P}_{\mathcal{T}}\left[\boldsymbol{P}\right]\right\rangle\notag\\&\quad\quad+\left\langle\overline{\boldsymbol{U}}\overline{\boldsymbol{V}}^\ast-\mathcal{P}_{\mathcal{T}^\bot}\left[\boldsymbol{\Lambda}\right],\mathcal{P}_{\mathcal{T}^\bot}\left[\boldsymbol{P}\right]\right\rangle-\left\langle\lambda\overline{\boldsymbol{\Sigma}}_0-\mathcal{P}_{\mathcal{V}}\left[\boldsymbol{\Lambda}\right],\boldsymbol{P}\right\rangle\notag\\&\quad\quad+\left\langle-\lambda sgn\left(\mathcal{P}_{\mathcal{N}}\left[\boldsymbol{P}\right]\right)-\mathcal{P}_{\mathcal{V}^\bot}\left[\boldsymbol{\Lambda}\right],-\boldsymbol{P}\right\rangle\notag\\&\quad\geq -\frac{1}{n^2}\left\lVert\mathcal{P}_{\mathcal{T}}\left[\boldsymbol{P}\right]\right\lVert_F+\left\lVert\mathcal{P}_{\mathcal{T}^\bot}\left[\boldsymbol{P}\right]\right\lVert_\ast-\left\lVert\mathcal{P}_{\mathcal{T}^\bot}\left[\boldsymbol{\Lambda}\right]\right\lVert\left\lVert\mathcal{P}_{\mathcal{T}^\bot}\left[\boldsymbol{P}\right]\right\lVert_\ast\notag\\&\quad\quad+\lambda\left\lVert\mathcal{P}_{\mathcal{N}}\left[\boldsymbol{P}\right]\right\lVert_1-\left\lVert\mathcal{P}_{\mathcal{N}}\left[\boldsymbol{\Lambda}\right]\right\lVert_\infty\left\lVert\mathcal{P}_{\mathcal{N}}\left[\boldsymbol{P}\right]\right\lVert_1\notag\\&\quad\geq -\frac{1}{n^2}\left\lVert\mathcal{P}_{\mathcal{T}}\left[\boldsymbol{P}\right]\right\lVert_F+(1-\frac{1}{2})\left\lVert\mathcal{P}_{\mathcal{T}^\bot}\left[\boldsymbol{P}\right]\right\lVert_\ast\notag\\&\quad\quad+\lambda(1-\frac{1}{2})\left\lVert\mathcal{P}_{\mathcal{N}}\left[\boldsymbol{P}\right]\right\lVert_1\notag\\&\quad\geq -\frac{n+1}{n^2}\left\lVert\mathcal{P}_{\mathcal{N}}\left[\boldsymbol{P}\right]\right\lVert_F-\frac{n}{n^2}\left\lVert\mathcal{P}_{\mathcal{T}^\bot}\left[\boldsymbol{P}\right]\right\lVert_F\notag\\&\quad\quad+(1-\frac{1}{2})\left\lVert\mathcal{P}_{\mathcal{T}^\bot}\left[\boldsymbol{P}\right]\right\lVert_F+\lambda(1-\frac{1}{2})\left\lVert\mathcal{P}_{\mathcal{N}}\left[\boldsymbol{P}\right]\right\lVert_F\notag\\&\quad=(\lambda(1-\frac{1}{2})-\frac{n+1}{n^2})\left\lVert\mathcal{P}_{\mathcal{N}}\left[\boldsymbol{P}\right]\right\lVert_F\notag\\&\quad\quad+(1-\frac{1}{2}-\frac{1}{n})\left\lVert\mathcal{P}_{\mathcal{T}^\bot}\left[\boldsymbol{P}\right]\right\lVert_F.
	\end{align}
	Then using $\mathcal{N}^\bot\cap\mathcal{T}=\left\{\boldsymbol{0}\right\}$, the proof is completed.
\end{proof}
\subsubsection{Proof of Lemma \ref{Le: Certificate1}}
\begin{proof}
	We construct the first part of the desired dual certificate in spirit of the golfing scheme \cite{gross2011recovering,candes2011robust,li2013compressed}. Firstly, we construct a series of $k$ random subsets $\mathsf{M}_i\thicksim Ber(\eta)\in\mathbb{R}^{n\times n},\,i\in[k],$ where $\eta=1-\rho^{\frac{1}{k}}$ such that $\mathsf{M}=\mathsf{W}^\bot=\cup_{i=1}^k\mathsf{M}_i$, and denote $\mathcal{M}_i$ to be the linear space consisting of those matrices supported by $\mathsf{M}_i$.  Then simple deduction gives $\eta\geq\frac{1-\rho}{k}.$ Thus if we set $k=C_g\log(n)$, there is $\eta\geq\frac{C}{C_g}\frac{\nu r\log(n)}{n}.$ Then by Prop \ref{Prop:Aux1}, with high probability
	$$\left\lVert\mathcal{P}_{\mathcal{T}}-\eta^{-1}\mathcal{P}_\mathcal{T}\mathcal{P}_\mathcal{O}\mathcal{P}_{\mathcal{M}_i}\mathcal{P}_\mathcal{T}\right\lVert\leq \frac{1}{2}+\epsilon,\,i=1,2,\cdots,k.$$
	We next construct a sequence of matrices $\boldsymbol{\Lambda}_0=0,\boldsymbol{\Lambda}_1,\cdots,\boldsymbol{\Lambda}_k$ through
	\begin{equation}
		\begin{aligned}
			\boldsymbol{D}_i=\mathcal{P}_{\mathcal{T}}\mathcal{P}_{\mathcal{O}}\mathcal{P}_{\mathcal{M}}\left[\boldsymbol{\Lambda}_i\right]-\boldsymbol{U}\boldsymbol{V}^\ast,i=0,1,\cdots,k,\\\boldsymbol{\Lambda}_i=\boldsymbol{\Lambda}_{i-1}-\eta^{-1}\mathcal{P}_{\mathcal{O}}\mathcal{P}_{\mathcal{M}_i}\left[\boldsymbol{D}_{i-1}\right],i=1,\cdots,k.
		\end{aligned}		
	\end{equation}
	then we have $\boldsymbol{D}_i\in\mathcal{T},\boldsymbol{\Lambda}_i\in\mathcal{O}\cap\mathcal{M},$ hence
	\begin{equation}
		\begin{aligned}
			\boldsymbol{D}_i&=\mathcal{P}_{\mathcal{T}}\mathcal{P}_{\mathcal{O}}\mathcal{P}_{\mathcal{M}}\left[\boldsymbol{\Lambda}_i\right]-\boldsymbol{U}\boldsymbol{V}^\ast\\&=\mathcal{P}_{\mathcal{T}}\mathcal{P}_{\mathcal{O}}\mathcal{P}_{\mathcal{M}}\left[\boldsymbol{\Lambda}_{i-1}\right]-\boldsymbol{U}\boldsymbol{V}^\ast-\eta\mathcal{P}_{\mathcal{T}}\mathcal{P}_{\mathcal{O}}\mathcal{P}_{\mathcal{M}_i}\left[\boldsymbol{D}_{i-1}\right]\\&=\boldsymbol{D}_{i-1}-\eta\mathcal{P}_{\mathcal{T}}\mathcal{P}_{\mathcal{O}}\mathcal{P}_{\mathcal{M}_i}\left[\boldsymbol{D}_{i-1}\right]\\&=\left(\mathcal{P}_{\mathcal{T}}-\eta^{-1}\mathcal{P}_\mathcal{T}\mathcal{P}_\mathcal{O}\mathcal{P}_{\mathcal{M}_i}\mathcal{P}_\mathcal{T}\right)\left[\boldsymbol{D}_{i-1}\right].
		\end{aligned}		
	\end{equation}
	It follows that
	\begin{equation}
		\begin{aligned}
			\left\lVert\boldsymbol{D}_k\right\lVert_F&=\left\lVert\left(\mathcal{P}_{\mathcal{T}}-\eta^{-1}\mathcal{P}_\mathcal{T}\mathcal{P}_\mathcal{O}\mathcal{P}_{\mathcal{M}_i}\mathcal{P}_\mathcal{T}\right)^k\left[\boldsymbol{U}\boldsymbol{V}^\ast\right]\right\lVert_F\\&\leq\left(\frac{1}{2}+\epsilon\right)^k\sqrt{r}\\&=\frac{\sqrt{r}}{n^{C_g\log(\frac{2}{1+2\epsilon})}}.
		\end{aligned}		
	\end{equation}
	Moreover, Prop \ref{Prop:Aux2} helps to conclude that $$\left\lVert\left(\mathcal{P}_\mathcal{T}-\eta^{-1}\mathcal{P}_\mathcal{T}\mathcal{P}_\mathcal{O}\mathcal{P}_\mathcal{M}\mathcal{P}_\mathcal{T}\right)\left[\boldsymbol{D}_i\right]\right\lVert_\infty\leq \left(\gamma+\epsilon\right)\left\lVert\boldsymbol{D}_i\right\lVert_\infty,\,\forall i,$$ resulting in that
	\begin{equation}
		\begin{aligned}
			\left\lVert\boldsymbol{\Lambda}_k\right\lVert_\infty&=\left\lVert\sum_{i=1}^{k}-\eta^{-1}\mathcal{P}_{\mathcal{O}}\mathcal{P}_{\mathcal{M}_i}\left[\boldsymbol{D}_{i-1}\right]\right\lVert_\infty\\&\leq \eta^{-1}\sum_{i=1}^{k}\left\lVert\boldsymbol{D}_{i-1}\right\lVert_\infty\\&\leq \eta^{-1}\sum_{i=1}^{k}\left\lVert\boldsymbol{U}\boldsymbol{V}^\ast\right\lVert_\infty(\gamma+\epsilon)^{i-1}\\&=\frac{\eta^{-1}}{1-(\gamma+\epsilon)}\left\lVert\boldsymbol{U}\boldsymbol{V}^\ast\right\lVert_\infty\\&\leq \frac{C_g\log(n)}{(1-(\gamma+\epsilon))(1-\rho)}\left\lVert\boldsymbol{U}\boldsymbol{V}^\ast\right\lVert_\infty\\&\leq \frac{C_g\log(n)}{(1-(\gamma+\epsilon))\sqrt{(1-\rho)}}\sqrt{\frac{n}{C\nu r\log^2(n)}}\left\lVert\boldsymbol{U}\boldsymbol{V}^\ast\right\lVert_\infty\\&= \frac{C_g\sqrt{n}}{(1-(\gamma+\epsilon))\sqrt{(1-\rho)C}\sqrt{\nu r}}\left\lVert\boldsymbol{U}\boldsymbol{V}^\ast\right\lVert_\infty\\&\leq \frac{C_g\sqrt{n}}{(1-(\gamma+\epsilon))\sqrt{(1-\rho)C}\sqrt{\nu r}}\frac{\sqrt{\nu r}}{n\sqrt{\log(n)}}\\&= \frac{C_g}{(1-(\gamma+\epsilon))\sqrt{(1-\rho)C}}\lambda.
		\end{aligned}		
	\end{equation}
	Carefully choosing $C_g$ and $C$, there follows $\left\lVert\boldsymbol{D}_k\right\lVert_F<\frac{1}{n^2}$ and $\left\lVert\boldsymbol{\Lambda}_k\right\lVert_\infty<\frac{\lambda}{4}.$
	
	In the sequel, we attempt to bound $\left\lVert\mathcal{P}_{\mathcal{T}^\bot}\left[\boldsymbol{\Lambda}_k\right]\right\lVert.$ From the construction of $\boldsymbol{\Lambda}_k$, we have
	\begin{equation}
		\begin{aligned}
			\boldsymbol{\Lambda}_k=\sum_{i=1}^{k}-\eta^{-1}\mathcal{P}_{\mathcal{O}}\mathcal{P}_{\mathcal{M}_i}\left[\boldsymbol{D}_{i-1}\right],
		\end{aligned}		
	\end{equation}
	thus
	\begin{equation}
		\begin{aligned}
			\mathcal{P}_{\mathcal{T}^\bot}\left[\boldsymbol{\Lambda}_k\right]&=\sum_{i=1}^{k}-\eta^{-1}\mathcal{P}_{\mathcal{T}^\bot}\mathcal{P}_{\mathcal{O}}\mathcal{P}_{\mathcal{M}_i}\left[\boldsymbol{D}_{i-1}\right]\\&=\sum_{i=1}^{k}\mathcal{P}_{\mathcal{T}^\bot}\left(\mathcal{P}_{\mathcal{T}}-\eta^{-1}\mathcal{P}_{\mathcal{O}}\mathcal{P}_{\mathcal{M}_i}\right)\left[\boldsymbol{D}_{i-1}\right].
		\end{aligned}		
	\end{equation}
	Then with high probability
	\begin{align}
		\left\lVert\mathcal{P}_{\mathcal{T}^\bot}\left[\boldsymbol{\Lambda}_k\right]\right\lVert&\leq\sum_{i=1}^{k}\left\lVert\mathcal{P}_{\mathcal{T}^\bot}\left(\mathcal{P}_{\mathcal{T}}-\eta^{-1}\mathcal{P}_{\mathcal{O}}\mathcal{P}_{\mathcal{M}_i}\right)\left[\boldsymbol{D}_{i-1}\right]\right\lVert\notag\\&\leq\sum_{i=1}^{k}\left\lVert\left(\mathcal{P}_{\mathcal{T}}-\eta^{-1}\mathcal{P}_{\mathcal{O}}\mathcal{P}_{\mathcal{M}_i}\right)\left[\boldsymbol{D}_{i-1}\right]\right\lVert\notag\\&=\sum_{i=1}^{k}\left\lVert\left(\mathcal{I}-\eta^{-1}\mathcal{P}_{\mathcal{O}}\mathcal{P}_{\mathcal{M}_i}\right)\left[\boldsymbol{D}_{i-1}\right]\right\lVert\notag\\&=\sum_{i=1}^{k}\left\lVert\left(\mathcal{P}_{\mathcal{O}}-\eta^{-1}\mathcal{P}_{\mathcal{O}}\mathcal{P}_{\mathcal{M}_i}+\mathcal{P}_{\mathcal{O}^\bot}\right)\left[\boldsymbol{D}_{i-1}\right]\right\lVert\notag\\&\leq\sum_{i=1}^{k}\Big(\left\lVert\left(\mathcal{P}_{\mathcal{O}}-\eta^{-1}\mathcal{P}_{\mathcal{O}}\mathcal{P}_{\mathcal{M}_i}\right)\left[\boldsymbol{D}_{i-1}\right]\right\lVert\notag\\&\quad+\left\lVert\mathcal{P}_{\mathcal{O}^\bot}\left[\boldsymbol{D}_{i-1}\right]\right\lVert\Big)\notag\\&\leq\sum_{i=1}^{k}\Big(\left\lVert\left(\mathcal{I}-\eta^{-1}\mathcal{P}_{\mathcal{M}_i}\right)\left[\boldsymbol{D}_{i-1}\right]\right\lVert+\left\lVert\boldsymbol{D}_{i-1}\right\lVert\Big)\notag\\&\leq \sum_{i=1}^{k}\Big(\left\lVert\left(\mathcal{I}-\eta^{-1}\mathcal{P}_{\mathcal{M}_i}\right)\left[\boldsymbol{D}_{i-1}\right]\right\lVert+n\left\lVert\boldsymbol{D}_{i-1}\right\lVert\Big)\notag\\&\leq \sum_{i=1}^{k}\Big(C'_g n\left\lVert\boldsymbol{D}_{i-1}\right\lVert_\infty+n\left\lVert\boldsymbol{D}_{i-1}\right\lVert_\infty\Big)\notag\\&=\sum_{i=1}^{k}(C'_g+1) n \left\lVert\boldsymbol{D}_{i-1}\right\lVert_\infty\notag\\&\leq \sum_{i=1}^{k}(C'_g+1)(C'_g+1)n\left(\gamma+\epsilon\right)^{i-1} \left\lVert\boldsymbol{U}\boldsymbol{V}^\ast\right\lVert_\infty\notag\\&\leq \frac{(C'_g+1) \sqrt{\nu r}}{(1-(\gamma+\epsilon))\sqrt{\log(n)}}
	\end{align}
	where the fourth to last row is obtained by invoking Thm 6.3 in \cite{WOS:000272299900003} and $C'_g$ is some positive constant. Hence with a large enough $n$, we have $\left\lVert\mathcal{P}_{\mathcal{T}^\bot}\left[\boldsymbol{\Lambda}_L\right]\right\lVert<\frac{1}{4}$. 
	Finally, by setting $\boldsymbol{\Lambda}_L=\mathcal{P}_{\mathcal{O}}\mathcal{P}_{\mathcal{M}}\left[\boldsymbol{\Lambda}_k\right]=\mathcal{P}_{\mathcal{N}}\left[\boldsymbol{\Lambda}_k\right]$, the first part of the  desired dual certificate is established.
\end{proof}
\subsubsection{Proof of Lemma \ref{Le: Certificate2}}
\begin{proof}
	Recall that Cor \ref{Cor:Aux1} gives $\left\lVert\mathcal{P}_{\mathcal{V}}\mathcal{P}_{\mathcal{T}}\right\lVert^2\leq \rho_s$ (omitting a small enough $\epsilon$ for simplicity). Besides, by Lemma 5.6 and 5.11 in \cite{liu2019matrix}, given $\gamma_{\mathsf{O},\mathsf{O}^T}\left(\boldsymbol{L}_0\right)>\frac{3}{4}$,  $\left\lVert\left(\mathcal{P}_{\mathcal{T}}\mathcal{P}_{\mathcal{O}}\mathcal{P}_{\mathcal{T}}\right)^{-1}\right\lVert=\left\lVert\sum_{i=0}^{\infty}\left(\mathcal{P}_{\mathcal{T}}\mathcal{P}_{\mathcal{O}^\bot}\mathcal{P}_{\mathcal{T}}\right)^i\right\lVert\leq\sum_{i=0}^{\infty}2^{-i}=2.$ This brings that $\left\lVert\mathcal{P}_{\mathcal{V}}\mathcal{P}_{(\mathcal{T}+\mathcal{O}^\bot)}\mathcal{P}_{\mathcal{V}}\right\lVert=\left\lVert\mathcal{P}_{\mathcal{V}}\mathcal{P}_{\mathcal{T}}\left(\mathcal{P}_{\mathcal{T}}\mathcal{P}_{\mathcal{O}}\mathcal{P}_{\mathcal{T}}\right)^{-1}\mathcal{P}_{\mathcal{T}}\mathcal{P}_{\mathcal{V}}\right\lVert\leq2\rho_s.$ Thus $\sum_{i=0}^{\infty}\left(\mathcal{P}_{\mathcal{V}}\mathcal{P}_{(\mathcal{T}+\mathcal{O}^\bot)}\mathcal{P}_{\mathcal{V}}\right)^i$ is well defined and we're able to consider $$\boldsymbol{\Lambda}_S\triangleq\lambda\left(\mathcal{I}-\mathcal{P}_{(\mathcal{T}+\mathcal{O}^\bot)}\right)\sum_{i=0}^{\infty}\left(\mathcal{P}_{\mathcal{V}}\mathcal{P}_{(\mathcal{T}+\mathcal{O}^\bot)}\mathcal{P}_{\mathcal{V}}\right)^i\left[\overline{\boldsymbol{\Sigma}}_0\right].$$ Obviously, $\mathcal{P}_{\mathcal{T}}\left[\boldsymbol{\Lambda}_S\right]=\mathcal{P}_{\mathcal{O}^\bot}\left[\boldsymbol{\Lambda}_S\right]=0$ and 
	\begin{equation}
		\begin{aligned}
			&\quad\mathcal{P}_{\mathcal{V}}\left[\boldsymbol{\Lambda}_S\right]\\&=\lambda\mathcal{P}_{\mathcal{V}}\left(\mathcal{I}-\mathcal{P}_{(\mathcal{T}+\mathcal{O}^\bot)}\right)\sum_{i=0}^{\infty}\left(\mathcal{P}_{\mathcal{V}}\mathcal{P}_{(\mathcal{T}+\mathcal{O}^\bot)}\mathcal{P}_{\mathcal{V}}\right)^i\left[\overline{\boldsymbol{\Sigma}}_0\right]\\&=\lambda\left(\sum_{i=0}^{\infty}\left(\mathcal{P}_{\mathcal{V}}\mathcal{P}_{(\mathcal{T}+\mathcal{O}^\bot)}\mathcal{P}_{\mathcal{V}}\right)^i-\sum_{i=1}^{\infty}\left(\mathcal{P}_{\mathcal{V}}\mathcal{P}_{(\mathcal{T}+\mathcal{O}^\bot)}\mathcal{P}_{\mathcal{V}}\right)^i\right)\left[\overline{\boldsymbol{\Sigma}}_0\right]\\&=\lambda\overline{\boldsymbol{\Sigma}}_0.
		\end{aligned}
	\end{equation}
	It then remains to bound $\left\lVert\mathcal{P}_{\mathcal{T}^\bot}\left[\boldsymbol{\Lambda}_S\right]\right\lVert$ and $\left\lVert\mathcal{P}_{\mathcal{N}}\left[\boldsymbol{\Lambda}_S\right]\right\lVert_\infty$.
	
	Firstly, we expand $\mathcal{P}_{\mathcal{T}^\bot}\left[\boldsymbol{\Lambda}_S\right]$ as
	\begin{align}
		\mathcal{P}_{\mathcal{T}^\bot}\left[\boldsymbol{\Lambda}_S\right]&=\lambda\left(\mathcal{I}-\mathcal{P}_{(\mathcal{T}+\mathcal{O}^\bot)}\right)\left[\overline{\boldsymbol{\Sigma}}_0\right]\notag\\&\quad+\lambda\left(\mathcal{I}-\mathcal{P}_{(\mathcal{T}+\mathcal{O}^\bot)}\right)\sum_{i=1}^{\infty}\left(\mathcal{P}_{\mathcal{V}}\mathcal{P}_{(\mathcal{T}+\mathcal{O}^\bot)}\mathcal{P}_{\mathcal{V}}\right)^i\left[\overline{\boldsymbol{\Sigma}}_0\right].
	\end{align}
	The operator norm of the first term can be controlled as
	\begin{align}
		&\quad\left\lVert\lambda\left(\mathcal{I}-\mathcal{P}_{(\mathcal{T}+\mathcal{O}^\bot)}\right)\left[\overline{\boldsymbol{\Sigma}}_0\right]\right\lVert\leq\lambda\left\lVert\overline{\boldsymbol{\Sigma}}_0\right\lVert\notag\\&=\lambda\left\lVert\mathcal{P}_{\mathcal{O}}\left[\boldsymbol{\Sigma}_0\right]\right\lVert\leq\lambda\left\lVert\boldsymbol{\Sigma}_0\right\lVert\leq4\lambda\sqrt{n\rho}\leq4\sqrt{\frac{\rho}{\log(n)}}
	\end{align}
	where $\boldsymbol{\Sigma}_0\triangleq sgn\left(\boldsymbol{S}_0\right)$ and the second to last inequality is given in \cite{vershynin2010introduction}. In regard to the second term, let $\mathsf{N}$ be a $\frac{1}{2}$-Net for $\mathbb{S}^{n-1}$ (see \cite{ledoux2001concentration} for the definition and existence of $\epsilon$-Net), basic arguments in \cite{vershynin2010introduction} delivers that
	\begin{align}
		\left\lVert\mathcal{F}\left[\overline{\boldsymbol{\Sigma}}_0\right]\right\lVert=\sup_{\boldsymbol{x},\boldsymbol{y}\in\mathbb{S}^{n-1}}\,\boldsymbol{x}^*\mathcal{F}\left[\overline{\boldsymbol{\Sigma}}_0\right]\boldsymbol{y}\leq4\sup_{\boldsymbol{x},\boldsymbol{y}\in\mathbb{N}}\,\boldsymbol{x}^*\mathcal{F}\left[\overline{\boldsymbol{\Sigma}}_0\right]\boldsymbol{y}\notag\\=4\sup_{\boldsymbol{x},\boldsymbol{y}\in\mathbb{N}}\,\left\langle\mathcal{F}\left[\boldsymbol{x}\boldsymbol{y}^*\right],\overline{\boldsymbol{\Sigma}}_0\right\rangle
	\end{align}
	where $\mathcal{F}\left[\cdot\right]\triangleq\lambda\left(\mathcal{I}-\mathcal{P}_{(\mathcal{T}+\mathcal{O}^\bot)}\right)\sum_{i=1}^{\infty}\left(\mathcal{P}_{\mathcal{V}}\mathcal{P}_{(\mathcal{T}+\mathcal{O}^\bot)}\mathcal{P}_{\mathcal{V}}\right)^i\left[\cdot\right].$
	Since $\left\langle\mathcal{F}\left[\boldsymbol{x}\boldsymbol{y}^*\right],\overline{\boldsymbol{\Sigma}}_0\right\rangle$ is a linear combination of Rademacher $\pm1$ random variables, we may 
	invoke Hoeffding's inequality to obtain that conditioned on some specific support $\mathsf{V}$,
	$$Pr\left(\left\langle\mathcal{F}\left[\boldsymbol{x}\boldsymbol{y}^*\right],\overline{\boldsymbol{\Sigma}}_0\right\rangle>t\mid \mathsf{V}\right)\leq\exp\left(-\frac{t^2}{2\left\lVert\mathcal{F}\left[\boldsymbol{x}\boldsymbol{y}^*\right]\right\lVert_F^2}\right).$$
	Since $\left\lVert\mathcal{F}\left[\boldsymbol{x}\boldsymbol{y}^*\right]\right\lVert_F\leq \lambda \frac{2\rho_s}{1-2\rho_s}$ conditioned on $\mathsf{E}=\left\{\left\lVert\mathcal{P}_{\mathcal{V}}\mathcal{P}_{\mathcal{T}}\right\lVert^2\leq \rho_s\right\}$, we conclude that
	\begin{align}
		&\quad Pr\left(\left\lVert\mathcal{F}\left[\overline{\boldsymbol{\Sigma}}_0\right]\right\lVert>t\right)\notag\\&\leq Pr\left(\sup_{\boldsymbol{x},\boldsymbol{y}\in\mathbb{N}}\,\left\langle\mathcal{F}\left[\boldsymbol{x}\boldsymbol{y}^*\right],\overline{\boldsymbol{\Sigma}}_0\right\rangle>\frac{t}{4}\right)\notag\\&\leq Pr\left(\sup_{\boldsymbol{x},\boldsymbol{y}\in\mathbb{N}}\,\left\langle\mathcal{F}\left[\boldsymbol{x}\boldsymbol{y}^*\right],\overline{\boldsymbol{\Sigma}}_0\right\rangle>\frac{t}{4}\mid\mathsf{E}\right)+Pr\left(\mathsf{E}^c\right)\notag\\&\leq\mid\mathsf{N}\mid^2\sup_{\boldsymbol{x},\boldsymbol{y}\in\mathbb{N}}Pr\left(\left\langle\mathcal{F}\left[\boldsymbol{x}\boldsymbol{y}^*\right],\overline{\boldsymbol{\Sigma}}_0\right\rangle>\frac{t}{4}\mid\mathsf{E}\right)+Pr\left(\mathsf{E}^c\right)\notag\\&\leq 6^{2n}\exp\left(-\frac{t^2(1-2\rho_s)^2}{128\rho^2_s\lambda^2}\right)+Pr\left(\mathsf{E}^c\right)\notag\\&=\left(\frac{36}{n^{\frac{t^2(1-2\rho_s)^2}{128\rho^2_s}}}\right)^n+Pr\left(\mathsf{E}^c\right).
	\end{align}
	Setting $t=\frac{1}{8}$, with small enough $\rho_s$ and large enough $n$ we will have that \begin{align}
		\left\lVert\lambda\left(\mathcal{I}-\mathcal{P}_{(\mathcal{T}+\mathcal{O}^\bot)}\right)\left[\overline{\boldsymbol{\Sigma}}_0\right]\right\lVert\leq\frac{1}{8},\notag\\
		\left\lVert\lambda\left(\mathcal{I}-\mathcal{P}_{(\mathcal{T}+\mathcal{O}^\bot)}\right)\sum_{i=1}^{\infty}\left(\mathcal{P}_{\mathcal{V}}\mathcal{P}_{(\mathcal{T}+\mathcal{O}^\bot)}\mathcal{P}_{\mathcal{V}}\right)^i\left[\overline{\boldsymbol{\Sigma}}_0\right]\right\lVert\leq\frac{1}{8}
	\end{align}
	with very high probability.
	Consequently, $\left\lVert\mathcal{P}_{\mathcal{T}^\bot}\left[\boldsymbol{\Lambda}_S\right]\right\lVert\leq\frac{1}{4}.$
	
	Next, we deal with $\left\lVert\mathcal{P}_{\mathcal{N}}\left[\boldsymbol{\Lambda}_S\right]\right\lVert_\infty$. Denoting that  $\mathcal{G}\left[\cdot\right]\triangleq\sum_{i=0}^{\infty}\left(\mathcal{P}_{\mathcal{V}}\mathcal{P}_{(\mathcal{T}+\mathcal{O}^\bot)}\mathcal{P}_{\mathcal{V}}\right)^i\mathcal{P}_{\mathcal{V}}\left(\mathcal{I}-\mathcal{P}_{(\mathcal{T}+\mathcal{O}^\bot)}\right)\left[\cdot\right],$ since $\overline{\boldsymbol{\Sigma}}_0\in\mathcal{V}$, we have $\lambda\mathcal{G}^*\left[\overline{\boldsymbol{\Sigma}}_0\right]=\boldsymbol{\Lambda}_S.$ Hence for any $(k,l)\in\mathsf{N}$, 
	\begin{align}
		\left\langle\boldsymbol{E}_{kl},\boldsymbol{\Lambda}_s\right\rangle=\lambda\left\langle\mathcal{G}\left[\boldsymbol{E}_{kl}\right],\overline{\boldsymbol{\Sigma}}_0\right\rangle.
	\end{align}
	Again, invoking Hoeffding's inequality, it is deduced that
	$$Pr\left(\left|\left\langle\mathcal{G}\left[\boldsymbol{E}_{kl}\right],\overline{\boldsymbol{\Sigma}}_0\right\rangle\right|>t\mid \mathsf{V}\right)\leq2\exp\left(-\frac{t^2}{2\left\lVert\mathcal{G}\left[\boldsymbol{E}_{kl}\right]\right\lVert_F^2}\right).$$
	Using the fact that
	$$\mathcal{P}_{\mathcal{V}}\left(\mathcal{I}-\mathcal{P}_{(\mathcal{T}+\mathcal{O}^\bot)}\right)\left[\boldsymbol{E}_{kl}\right]=\mathcal{P}_{\mathcal{V}}\mathcal{P}_{\mathcal{T}}\left(\mathcal{P}_{\mathcal{T}}\mathcal{P}_{\mathcal{O}}\mathcal{P}_{\mathcal{T}}\right)^{-1}\mathcal{P}_{\mathcal{T}}\left[\boldsymbol{E}_{kl}\right],$$
	we have
	\begin{align}
		&\quad\left\lVert\mathcal{G}\left[\boldsymbol{E}_{kl}\right]\right\lVert_F\notag\\&\leq\left\lVert\sum_{i=0}^{\infty}\left(\mathcal{P}_{\mathcal{V}}\mathcal{P}_{(\mathcal{T}+\mathcal{O}^\bot)}\mathcal{P}_{\mathcal{V}}\right)^i\mathcal{P}_{\mathcal{V}}\mathcal{P}_{\mathcal{T}}\left(\mathcal{P}_{\mathcal{T}}\mathcal{P}_{\mathcal{O}}\mathcal{P}_{\mathcal{T}}\right)^{-1}\right\lVert\notag\\&\quad\times\left\lVert\mathcal{P}_{\mathcal{T}}\left[\boldsymbol{E}_{kl}\right]\right\lVert_F\notag\\&\leq\frac{2\sqrt{\rho_s}}{1-2\rho_s}\sqrt{\frac{2\nu r}{n}}
	\end{align}
	provided $\left\lVert\mathcal{P}_{\mathcal{V}}\mathcal{P}_{\mathcal{T}}\right\lVert^2\leq \rho_s$.
	Thereupon, setting $t=\frac{1}{4}$ we're able to conclude that
	\begin{align}
		&\quad Pr\left(\left\lVert\mathcal{P}_{\mathcal{N}}\left[\boldsymbol{\Lambda}_S\right]\right\lVert_\infty>\frac{\lambda}{4}\right)\notag\\&\leq Pr\left(\sup_{\boldsymbol{k},\boldsymbol{l}}\,\left|\left\langle\mathcal{G}\left[\boldsymbol{E}_{kl}\right],\overline{\boldsymbol{\Sigma}}_0\right\rangle\right|>\frac{1}{4}\right)\notag\\&\leq Pr\left(\sup_{\boldsymbol{k},\boldsymbol{l}}\,\left|\left\langle\mathcal{G}\left[\boldsymbol{E}_{kl}\right],\overline{\boldsymbol{\Sigma}}_0\right\rangle\right|>\frac{1}{4}\mid\mathsf{E}\right)+Pr\left(\mathsf{E}^c\right)\notag\\&\leq n^2Pr\left(\left|\left\langle\mathcal{G}\left[\boldsymbol{E}_{kl}\right],\overline{\boldsymbol{\Sigma}}_0\right\rangle\right|>\frac{1}{4}\mid\mathsf{E}\right)+Pr\left(\mathsf{E}^c\right)\notag\\&\leq \frac{2n^2}{\exp\left(\frac{\left(1-2\rho_s\right)^2}{256\rho_s\nu r}n\right)}+Pr\left(\mathsf{E}^c\right).
	\end{align}
	As thus, with a small enough $\rho_s$ or at least a large enough $n$, $\left\lVert\mathcal{P}_{\mathcal{N}}\left[\boldsymbol{\Lambda}_S\right]\right\lVert_\infty\leq\frac{\lambda}{4}$ with very high probability.
	
	The second part of the desired dual certificate is then established.
\end{proof}
\section{Conclusions}
\label{sec:Conclusion}
This paper delves into an yet challenging problem---retrieving the low-rank matrix from its compressed superposition with another sparse matrix. While existing major researches focus on the random-sampling case, we provide the very first theoretical recovery guarantee for the deterministic-sampling case. By appropriately adjusting the construction scheme of the dual certificate, exact identifiability is established with the help of \textit{relative well-conditionedness} and the proposed \textit{restricted approximate $\infty$-isometry property}. Since deterministic sampling is more hardware-friendly, our theory breeds tremendous potential for application in the design of man-made compressive sensing system. 

\bibliographystyle{IEEEtran}
\bibliography{bibfile}
%\enlargethispage{-5in}

\end{document}